\theoremstyle{thmstyleone}%
\theoremstyle{thmstyletwo}%
\newtheorem{example}{Example}%
\theoremstyle{thmstylethree}%
\newtheorem{definition}{Definition}%
\providecommand{\corollaryname}{Corollary}
\providecommand{\claimname}{Claim}
\providecommand{\lemmaname}{Lemma}
\providecommand{\notationname}{Notation}
\providecommand{\problemname}{Problem}
\providecommand{\conjecturename}{Conjecture}
\providecommand{\experimentname}{Experiment}
\DeclareMathAlphabet{\mathpzc}{OT1}{pzc}{m}{it}
\DeclareMathSymbol{\shortminus}{\mathbin}{AMSa}{"39}
\newcommand{\dm}[1]{\ifthenelse{\boolean{commentsactivated}}{{\color{blue} {\em DM: #1 }}}{}}
\newcommand{\vl}[1]{\ifthenelse{\boolean{commentsactivated}}{{\color{purple} {\em VL: #1 }}}{}}
\newcommand{\vk}[1]{\ifthenelse{\boolean{commentsactivated}}{{\color{red} {\em VK: #1 }}}{}}
\newcommand{\dhl}[1]{\ifthenelse{\boolean{commentsactivated}}{{\color{teal} {\em DH: #1 }}}{}}
\newcommand{\eg}{{for example,}~}
\newcommand{\ie}{{that is,}~}
\newcommand{\defword}[1]{\textbf{\boldmath{#1}}}
\newcommand{\N}{\mathbb{N}}
    \newcommand{\Reals}{\mathbb{R}}
\newcommand{\mc}{\mathcal}
\newcommand{\mb}{\mathbb}
\newcommand{\actions}{\mc A}
\newcommand{\action}{a}
\newcommand{\actionAlt}{a'}
\newcommand{\policy}{\pi}
\newcommand{\policies}{\Pi}
\newcommand{\chance}{\textnormal{c}}
\newcommand{\pl}{i}
\newcommand{\histories}{\mc H}
\newcommand{\history}{h}
\newcommand{\suffix}{\sqsupset}                   
\newcommand{\gameroot}{\textnormal{root}}
\newcommand{\gameRoot}{\textnormal{root}}
\newcommand{\leaf}{z}
\newcommand{\leaves}{\mc Z}
\newcommand{\infostate}{s}
\newcommand{\infostateAlt}{t}
\newcommand{\infostates}{\mc S}
\newcommand{\ims}{\textnormal{ims}}     
\newcommand{\observation}{o}            
\newcommand{\privState}{\infostate'}
\newcommand{\privStateAlt}{\infostateAlt'}
\newcommand{\privStates}{\infostates'}
\newcommand{\public}{\textnormal{pub}}
\newcommand{\publicTree}{\mc S_\public}
\newcommand{\publicState}{s_\public}
\newcommand{\publicStateAltAlt}{t_\public}
\newcommand{\publicTerminalState}{\leaf_\public}
\newcommand{\reachProb}{P}
\newcommand{\cf}[1]{{#1, \textnormal{cf}}}
\newcommand{\cfSubscript}{\textnormal{cf}}
\newcommand{\historyValue}{v}
\newcommand{\infostateValue}{V}
\newcommand{\historyActionValue}{q}
\newcommand{\infostateActionValue}{Q}
\newcommand{\infosetValue}{\infostateValue}
\newcommand{\infosetActionValue}{\infostateActionValue}
\newcommand{\BGAbbrev}{SBG}
\newcommand{\BGFull}{Sequential Bayesian game}
\newcommand{\fosgAbr}{FOSG}
\newcommand{\vanillaCFR}{\texttt{Vanilla-CFR}}
\newcommand{\publicStateCFR}{\texttt{PS-CFR}}
\newcommand{\openSpielCFR}{\vanillaCFR{}\ \texttt{(mem-eff)}}
\newcommand{\saveCFR}{\vanillaCFR{}}
\newcommand{\genericPSCFR}{\publicStateCFR{}}
\newcommand{\pokerPSCFR}{\publicStateCFR{}\ \texttt{(dom-sp)}}
\newcommand{\regret}{R}
\newcommand{\reachProbCode}{\reachProb}
\newcommand{\regUpdate}{\texttt{RegretUpdate}}
\newcommand{\psRegUpdate}{\texttt{PS-RegretUpdate}}
\newcommand{\histRegUpdate}{\texttt{Hist-RegretUpdate}}
\newcommand{\CFR}{\texttt{CFR}}
\newcommand{\regretMatching}{\texttt{RM}}
\newcommand{\chanceWeightedUtilities}{\texttt{ChWU}}
\newcommand{\bigO}{O}
\newcommand{\increment}{\mathrel{\raisebox{0.19ex}{$\scriptstyle+$}}=}
\newcommand{\supp}{\mathpzc{supp}}                
\newcommand{\utility}{u}
\newcommand{\opp}{j}          
\newcommand{\others}{{\textnormal{-}\pl}}
\newcommand{\modelAbbrev}{FOSG}
\newcommand{\modelName}{factored-observation stochastic game}
\newcommand{\initState}{w^{\textnormal{init}}}
\newcommand{\priv}[1]{{\textnormal{priv(#1)}}}    
\newcommand{\game}{{\mc{G}}}
\begin{document}

\title[Revisiting Game Representations: Hidden Costs of Efficiency]{Revisiting Game Representations: The~Hidden Costs of Efficiency in~Sequential Decision-making Algorithms}

\author*[12]{\fnm{Vojt\v{e}ch} \sur{Kova\v{r}\'ik}}\email{vojta.kovarik@gmail.com}

\author[2]{\fnm{David} \sur{Milec}}\email{milecdav@fel.cvut.cz}

\author[2]{\fnm{Michal} \sur{\v{S}ustr}}\email{michal.sustr@aic.fel.cvut.cz}

\author[2]{\\\fnm{Dominik} \sur{Seitz}}\email{dominik.seitz@aic.fel.cvut.cz}

\author*[2]{\fnm{Viliam} \sur{Lis\'y}}\email{viliam.lisy@agents.fel.cvut.cz}

\affil[1]{\orgdiv{Foundations of Cooperative AI Lab, CS department},\\ \orgname{Carnegie Mellon University}, \orgaddress{\street{5000 Forbes Avenue}, \city{Pittsburgh},\\ \postcode{PA 15 213}, \country{United States}}}

\affil[2]{\orgdiv{Artificial Intelligence Center, Faculty of Electrical Engineering}, \orgname{Czech Technical University in Prague}, \orgaddress{\street{Technicka 2},\\ \city{Prague}, \postcode{166 27}, \country{Czech Republic}}}

\abstract{
Recent advancements in algorithms for sequential decision-making under imperfect information have shown remarkable success in large games such as limit- and no-limit poker. 
These algorithms traditionally formalize the games using the extensive-form game formalism, which, as we show, while theoretically sound, is memory-inefficient and computationally intensive in practice.
To mitigate these challenges, a popular workaround involves using a specialized representation based on player specific information-state trees.
However, as we show, this alternative significantly narrows the set of games that can be represented efficiently.

In this study, we identify the set of large games on which modern algorithms have been benchmarked as being naturally represented by Sequential Bayesian Games.
We elucidate the critical differences between extensive-form game and sequential Bayesian game representations, both theoretically and empirically. 
We further argue that the impressive experimental results often cited in the literature may be skewed, as they frequently stem from testing these algorithms only on this restricted class of games.
By understanding these nuances, we aim to guide future research in developing more universally applicable and efficient algorithms for sequential decision-making under imperfect information.
}

\keywords{imperfect information games, extensive form games, Bayesian game, poker, counterfactual regret minimization, public information}

\maketitle

\section{Introduction}\label{sec:intro}
An important part of intelligent behaviour is the ability to deal with uncertainty in multiagent interactions and can be modeled by games~\cite{nisan2007algorithmic}.
This area of research has recently seen a lot of progress by scaling up to solve larger and larger games, such as
    essentially solving two-player limit poker~\cite{tammelin2015solving},
    outperforming professional human players in no-limit poker~\cite{DeepStack,Libratus,Pluribus},
    or expert-level play in the strategic and negotiation game Diplomacy~\cite{Cicero}.

A common objective is to find a minimax equilibrium (or its approximation) in two-player zero-sum games.
Typically, when finding an equilibrium in a large game, one relies on some factorization to make the model of the game smaller, but still strategically equivalent to the original game~\cite{koller1994fast,FOG,zhang2020sparsified}.
Choosing an appropriate model is important, and can yield the difference between the algorithm terminating within a day or not even within a human lifetime.
While most literature is concerned with development of efficient algorithms, in this paper, we highlight the importance of \emph{natural models} of games and how the underlying game representations impact the practical performance of algorithms built on top of them.

We will use two examples to motivate the concept of natural modeling.
First, consider a formalization of the classical Rock-Paper-Scissors game.
While the normal-form game\footnotemark{} (NFG) model has 3 actions for each player and a utility function that can be represented by a matrix,
    the extensive-form game\footnotemark[\value{footnote}] (EFG) model additionally requires 13 histories
        (1 for first player, 3 for second player and 9 for terminals) and 2 infostates.
Not only do we need to include these additional histories and infostates, but the formalism is also longer to describe due to the introduction of these additional structures which are not needed in an NFG.
    \footnotetext{See \Cref{fig:rps_nfg_vs_efg} for description.}
Clearly, the NFG is a smaller and simpler model for Rock-Paper-Scissors and is more appropriate than an EFG. 

Second, as a contrast, consider formalization of the game Kuhn poker~\cite{kuhn1950simplified}.
Kuhn Poker can be modeled as an NFG with $64$ actions for each player.
However, if we model Kuhn poker as an EFG, we need only $12$ decision states with $2$ actions at each state.
Specifying the strategy for the NFG model requires $63$ numbers, while the EFG model only needs $12$ numbers.
As a result, the EFG model factorizes the NFG model and decreases the size of the strategy space over 5-fold.

From these examples, we observe some games have more appropriate models than others.
As a result, we refer to any model that provides the ``just right'' amount of factorization of the game, without introducing an overhead of unnecessary formalism, as one that \emph{naturally models} the game.
We refrain from giving an exact definition, as we found it is difficult to quantify formalisms based on some measure.
However, generally speaking, a natural model has a smaller game tree or strategy space than other models, which helps to build faster algorithms.

One might wonder whether there are games which are naturally modeled under formalisms other than NFGs or EFGs.
In this paper, we answer this question in the positive by observing that games like Poker, Liar's dice or Battleship can be naturally modeled as what we call Sequential Bayesian Games (\BGAbbrev{}s).
Moreover, we find that using \BGAbbrev{}s on these games as basis for these algorithms is not just intellectual curiosity, but is in fact necessary to find equilibria in these games within reasonable time and memory footprint, and is behind the mentioned successes in large games.

One of the key ingredients these successes have in common is counterfactual regret minimization (CFR),
    an algorithm for finding strong strategies for sequential decision-making under imperfect information~\cite{CFR}.
We describe, analyse, and discuss the implications of \emph{public-state CFR},
    a reformulation of the CFR algorithm that allows for exploiting the structure of \BGAbbrev{}s
        (i.e., public observations available in some games).
We further argue that the impressive experimental results cited in the literature may be skewed,
    as they frequently stem from testing CFR-like algorithms only on \BGAbbrev{}s, with CFR implemeneted as public-state CFR.
By understanding these nuances, we aim to guide future research in developing more universally applicable and efficient algorithms for sequential decision-making under imperfect information.

\begin{figure}
    \centering
    \begin{minipage}{.29\textwidth}
        \begin{tabular}{|c|c|c|c|}
            \hline
             & \textbf{r} & \textbf{p} & \textbf{s}  \\ \hline
             \textbf{R} & 0 & -1 & 1 \\ \hline
             \textbf{P} & 1 & 0 & -1 \\ \hline
             \textbf{S} & -1 & 1 & 0 \\ \hline
        \end{tabular}
    \end{minipage}
    \hfill
    \begin{minipage}{.69\textwidth}
        \includegraphics[width=\linewidth]{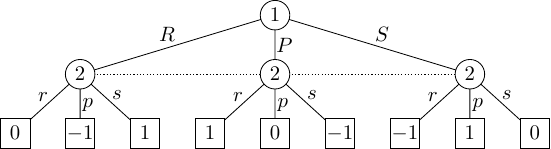}    
    \end{minipage}
    \caption{Comparison of the representation of the Rock paper scissors game. On the left is the normal-form representations and on the right is the extensive-form representation.}
    \label{fig:rps_nfg_vs_efg}
\end{figure}

\subsubsection*{Contribution}
In the first part of this paper,
    we describe public-state CFR in the context of factored-observation stochastic games (FOSGs)~\cite{FOG}.
Since FOSGs are a model which allows one to talk about public information,
    this makes it easier to decompose the game into subgames, and run the algorithm more efficiently.
Public-state CFR traverses the game tree on the level of public states  (rather than histories, as is usual for the ``vanilla'' CFR.).
We prove that the algorithm is no slower than vanilla CFR, but allows for significant speedups in some domains (\Cref{thm:PSCFR-complexity}).
We argue that the algorithm also allows for additional speedups by being more amenable to parallelisation and domain-specific implementations.
\smallskip
    
Second, we describe the class of \emph{\BGFull{}s}, where players have perfect information about everything except for the opponents' ``hidden type''.
    A canonical example of such game is hold'em poker, where each player's type is equal to their private cards.
We note that every game can be rewritten as a \BGFull{}.
    As a result, we think of games as \textit{being represented as} an \BGAbbrev{}, rather than \textit{being} an \BGAbbrev{}.
    This should not come as a surprise, since the same is true for normal- and extensive-form games.
    However, as with NFGs and EFGs, the \BGAbbrev{} representation is more natural for some games than others --- hence our claim about hold'em poker.

We argue that there is a connection between being \BGAbbrev{} and being particularly amenable to PS-CFR.
In particular, it seems noteworthy that the two most popular benchmarks in the CFR literature -- hold'em poker and (some versions of) liar's dice -- are both naturally modelled using this class of games.
\smallskip

Our final contribution is
    an empirical comparison of PS-CFR and vanilla CFR
    and discussion of the implications for CFR's performance in novel domains.
More specifically,
    we observe that PS-CFR significantly outperforms vanilla CFR on a poker subgame,
    and argue that many of the past CFR publications use an implementations that is closer to PS-CFR than to vanilla CFR (and predominantly test it on poker).
This suggests two things.
First,
    if one hopes to replicate CFR's ``advertised'' performance,
    it isn't sufficient to implement its vanilla version.
Second,
    since PS-CFR's superiority to vanilla CFR relies on the presence of public information (Prop.\,\ref{thm:Vanilla-CFR-complexity}, Thm.\,\ref{thm:PSCFR-complexity}),
    one should expect both algorithms' speed to be similar (i.e., slow) in domains where public information is scarce (e.g., blind chess).
        This does not necessarily imply that CFR is not promising for such games.
        However, it does mean that one might have to come up with a custom implementation of CFR that exploits the specific properties of the given game,
            costing precious programmer time.

\subsubsection*{Outline}
The remainder of the paper is structured as follows:
In Section~\ref{sec:background}, we formally describe factored-observation stochastic games and the vanilla version of CFR.
In Section~\ref{sec:ps-cfr},
    we describe public-state CFR for \fosgAbr{}s,
    analyze its complexity,
    and discuss its potential for practical improvements.
In \Cref{sec:beg}, we introduce sequential Bayesian games and discuss them. 
In Section~\ref{sec:empirical}, we illustrate our claims by presenting an empirical comparison of vanilla- and public-state CFR on poker.
In Section~\ref{sec:related-work}, we mention the most relevant literature.
In Section~\ref{sec:discussion}, we discuss the results and their implications.

\subsubsection*{Novelty}
The idea behind public-state CFR appears in several previous works,
    so our contribution lies not in coming up with the algorithm,
    but in describing it for a general setting,
    proving its properties,
    and running the presented experiments.
Similarly, models similar to sequential Bayesian games appear in several prior works.
    Our contribution is therefore not in inventing the model
    but in identifying it as highly relevant to CFR
    (and perhaps also in describing it in modern terms).
For more details on the related works, see \Cref{sec:related-work}.
Finally, the least tangible but nonetheless important contribution is in
    observing that more research might be needed in games where public information is less abundant than in poker.

\section{Background}\label{sec:background}

A \defword{\modelName{}} (\modelAbbrev) is a tuple $G = \left< \mc N, \mc W, p, \initState, \mc A, \mc T, \mc R, \mc O \right>$, where
$\mc N$ is the \defword{player set},
$\mc W$ is the set of \defword{world states},
$\initState$ is a designated \defword{initial state},
$p : \mc W \to 2^{\mc N}$ is a \defword{player function},
$\mc A$ is the space of \defword{joint actions},
$\mc T : \mc W \times \mc A \hookrightarrow \Delta \mc W$ is the \defword{transition function},
$\mc R : \mc W \times \mc A \hookrightarrow \Reals^{\mc N}$ is the \defword{reward function},
$\mc O : \mc W \times \mc A \times \mc W \hookrightarrow \mb O$ is the \defword{observation function},
and we have:

\medskip

\begin{itemize}
    \item $\mc N = \{1,\dots,N\}$ for some $N\in \mb N$.
    \item $\mc W$ is compact. For formal convenience, we assume that $p(\initState) = \emptyset$.
    \item $\mc A = \prod_{i \in \mc N} \mc A_i$, where each $\mc A_i$ is an arbitrary set of \defword{$i$'s actions}.
        \begin{itemize}
            \item For each $i\in p(w)$, $\mc A_i(w) \subset \mc A_i$ denotes a non-empty compact set of $i$'s (legal) \defword{actions at $w$}.
                We denote $\mc A(w) := \prod_{i \in p(w)} \mc A_i(w)$.
            \item We denote $\mc A_i(w) := \{noop\}$ for $i\notin p(w)$, which allows us to identify each $a \in \mc A(w)$ with an element of $\prod_{i \in \mc N} A_i(w)$ by appending to it the appropriate number of \textit{noop}\footnote{Recall that \textit{noop} stands for ``no operation'' ---  an action that makes no changes.} actions.
        \end{itemize}
    \item The transition probabilities $\mc T(w,a) \in \Delta \mc W$, $a\in \mc A(w)$, are defined for all $w\in \mc W$ with non-empty $p(w)$ and for some $w$ with no active players.
        \begin{itemize}
            \item A world state with $p(w)=\emptyset$ and undefined $\mc T(w,a)$ is called \defword{terminal}.
        \end{itemize}
    \item $\mc R (w,a) = (\mc R_i(w,a))_{i \in \mc N}$ for each non-terminal state $w$ and $a\in \mc A(w)$.
    \item $\mc O$ is factored into \defword{private observations} and \defword{public observations} as $\mc O = (\mc O_{\textnormal{priv}(1)},\dots,\mc O_{\textnormal{priv}(N)}, \mc O_{\textnormal{pub}} )$.
    \begin{itemize}
        \item $\mb O = \prod_{i\in\mc N} \mb O_{\textnormal{priv}(i)} \times \mb O_{\textnormal{pub}}$, where $\mb O_{(\cdot)}$ are arbitrary sets (of possible observations).
        \item We assume that $\mc O_{(\cdot)}(w,a,w') \in \mb O_{(\cdot)}$ is defined for every non-terminal $w$, $a \in \mc A(w)$, and $w'$ from the support\footnote{For finite $\mc W$, being in support of $\mc T(w,a)$ is equivalent to having a non-zero probability.} of $\mc T(w,a)$.
    \end{itemize}
\end{itemize}

\bigskip

The game starts by transitioning from the initial state $\initState$ (where no player acts) to some new state $w^1 \sim \mc T(\initState, \textnormal{\textit{noop}})$.
    This generates the initial observations that notify players that the game has started.
The game then proceeds in rounds:
    At each world state $w^k$, the active players $i \in p(w^k)$ select actions $a^k_i \in \mc A_i(w^k)$ (more on this below).
    The game then randomly transition to some $w^{k+1} \sim \mc T(w^k, \actions^k)$,
    which generates observations $o^k_i = (o^k_{\priv(i)}, o^k_\public) = \mc O_i(w^k, a^k, w^{k+1})$.
    Each transition $(w^k, a^k, w^{k+1})$ also generates rewards $r^k_i = \mc R_i(w^k, a^k, w^{k+1})$.
The game ends upon reaching some terminal world state.
    (In this paper, we assume $G$ is s.t. this always happens.)

We describe the playthroughs of the game in terms of  \defword{histories} --- i.e., sequences of the form $h = w^0 a^0 w^1 a^1 \dots w^k$
    (where $w^0 = \initState$, $a^t \in \actions(w^t)$, $w^{l+1} \in \supp(\mc T(w^l, a^l))$ ).
    We denote the set of all histories by $\histories$ and use $z$ and $\mc Z \subset \mc H$ to denote terminal histories
        (those which end in a terminal world state).
    For terminal $z \in \mc Z$, we define \defword{$i$'s utility} $u_i(z)$ as the sum of rewards $r^k_i$ received by $i$ along $z$.
    
To model the behaviour of the players,
    we assume that over the course of the game, each player $i$ maintains an \defword{information state} (infostate) $s_i$,
        which consists of the sequence of observations they have received so far, and actions they have taken.
    We sometimes factor infostates into a private infostate and public state,
        writing $\infostate_\pl = (\privState_\pl, \publicState)$.
        Each $\privState_\pl$ consists only of $i$'s private observations and actions while the latter consists only of public observations.
        We denote the corresponding sets as $\infostates_\pl$, $\privStates_\pl$, and $\publicTree$.
    We assume that $\infostate_\pl$ allows $\pl$ to infer their legal actions.\footnote{
            That is, we assume that
            $
                \forall h, h' \in \mc H :
                s_i(h) = s_i(h')
                \implies
                \actions_i(w(h)) = \actions_i(w(h'))
            $.
        }
    We thus model the players as selecting their actions using \defword{behavioural strategies} of the form
        $
            \policy_i :
                \infostates_\pl
                \to 
                \Delta(\actions_\pl)
        $, where
        $ 
                \infostate_\pl \mapsto \policy_i( \, \cdot \, \mid \infostate_\pl )
                \in \Delta (\actions_\pl(\infostate_\pl))
        $.
    We denote the corresponding sets as $\policies_\pl$ and $\policies = \prod_{\pl = 1}^N \policies_\pl$.
    We say that a \defword{policy profile} $\policy \in \policies$ is a \defword{Nash equilibrium} if $\forall \pl \forall \policy'_\pl \in \policies_\pl : u_\pl(\policy_\pl, \policy_\others) \geq u_\pl(\policy'_\pl, \policy_\others)$
        (where $\others$ corresponds to the players other than $\pl$ and $u_\pl(\policy)$ denotes $\pl$'s expected utility under $\policy$).
\subsection{Counterfactual Regret Minimization}\label{sec:vanilla-CFR}

Counterfactual regret minimization is a popular self-play algorithm for imperfect-information games \cite{CFR}.
It approximates a Nash equilibrium by iteratively traversing the game tree and minimizing a particular notion of regret --- called counterfactual regret --- at each decision point.
(Where regret measures how much better off a player could have been if they changed all their actions at the given decision point $\infostate_\pl$ to a specific $\action_\pl \in \actions_\pl(\infostate_\pl)$, and everything else remained constant.
The \textit{counterfactual} part refers to assigning weights to iterations proportional to the probability of encountering $\infostate_\pl$ at the given iteration in the counterfactual scenario where $\pl$ always selects actions that lead to $\infostate_\pl$.)
While CFR is not difficult to define formally, developing intuitions for the underlying concepts can be quite challenging and doing so would be outside of the scope of this text.
In this paper, we thus focus on giving the formal definitions while referring the reader interested in the intuitions to \cite{seitz2023value}.

\medskip

To describe CFR,
    we first need the notion of a \defword{reach probabilities} and their decomposition into the contributions of the individual players and chance.
Formally, we set
    $P_{c}(h) := \prod_{h'aw \sqsubseteq h}\mc{T}(h',a,w)$,
    $P^{\pi}_{\pl}(h) := \prod_{h'aw \sqsubseteq h}\pi_\pl(a_\pl \mid s_\pl(h'),a_\pl)$,
    and $\reachProb^\policy(\history) = \reachProb_\chance(\history) \prod_{\pl = 1}^N \reachProb^\policy_\pl(\history)$.
We also define the \defword{counterfactual reach probability} of $h$ as
    $P^{\pi}_{-i}(h) := P_{c}(h)\prod\nolimits_{j \neq \pl} P^{\pi}_j(h)$.
For infostates, we set
    $\reachProb^\policy(\infostate_\pl) := \sum_{h \in \histories, \, \infostate_\pl(h) = \infostate_\pl} \reachProb^\policy(h)$,
    $\reachProb^\policy_\pl(\infostate_\pl) := \reachProb^\policy(h_0)$ for an arbitrary $h_0 \in \histories$ with $\infostate_\pl(h_0) = \infostate_\pl$, and
    $\reachProb^\policy_\others(\infostate_\pl) := \sum_{h \in \histories, \, \infostate_\pl(h) = \infostate_\pl} \reachProb^\policy_\others(h)$.

\smallskip

To define $v$- and $q$-\defword{values} for $h \in \histories$ and $a_\pl \in \actions_\pl(h)$, we set\footnote{
    For a more detailed explanation (incl. the treatment of the case where $P^\pi_{-i}(s_i) = 0$), see \cite{seitz2021learning}.
}
\begin{align*}
     v^\pi_i(h) & :=
        \sum \left\{
            \reachProb^\policy(z) u_\pl(z)
            \mid 
            z \in \mc Z, \ z \suffix z
        \right\}
        /
        \reachProb^\policy(h)
     \\
     \historyActionValue^\policy_\pl(h, \action_\pl) & :=
        \sum \left\{
            \reachProb^\policy(z) u_\pl(z)
            \mid 
            z \in \mc Z, \ z \suffix h a_\pl
        \right\}
        /
        \reachProb^\policy(h)
        .
\end{align*}
We then extend this notation to infostate- and infostate-action values:
\begin{align*}
    \infostateValue^\pi_i(s_i) & := \sum_{s_i(h)=s_i} P^\pi(h \mid s_i) v^\pi_i(h),
    \\
    \infostateActionValue^\pi_i(s_i,a) & := \sum_{s_i(h)=s_i} P^\pi(h \mid s_i) q^\pi_i(h,a_i) ,
\end{align*}
    where $P^\pi(h \mid s_i) = P^\pi(h) / \reachProb^\policy(\infostate_\pl)$.
Finally, we define \defword{counterfactual values} as the (non-counterfactual) values multiplied by counterfactual reach-probabilities:
\begin{align*}
    \historyValue^\pi_{i,\textnormal{cf}}(h)
        & := P^\pi_{-i}(h) \historyValue^\pi_i(s_i) \\
    \historyActionValue^\pi_{i,\textnormal{cf}}(s_i,a_i)
        & := P^\pi_{-i}(h) \historyActionValue^\pi_i(s_i,a_i) \\
    \infostateValue^\pi_{\textnormal{cf}}(s_i)
        & :=
        \sum_{s_i(h) = s_i} \historyValue^\pi_{i,\textnormal{cf}}(h) \\
    \infostateActionValue^\pi_{\textnormal{cf}}(s_i,a_i)
        & :=
        \sum_{s_i(h) = s_i} \historyActionValue^\pi_{i,\textnormal{cf}}(h, a_i)
    .
\end{align*}
This allows us to define \defword{counterfactual regret} (at $\infostate_\pl \in \infostates_\pl$, under $\policy \in \policies$) as
\begin{equation*}
\regret^\policy_\cf{\pl} (\infostate_\pl, \action_\pl)
: =
    \infostateActionValue^\policy_\cf{\pl} (\infostate_\pl, \action_\pl)
    -
    \infostateValue^\policy_\cf{\pl} (\infostate_\pl)
.
\end{equation*}

\smallskip

The final ingredient needed for CFR is a way of translating regrets into strategy-updates.
We do this using the standard \defword{regret matching} (\regretMatching) formula at each infostate
    (where $x^+ := \max \{ x, 0 \}$
    and $\policy^{t+1}_\pl( \action_\pl \mid \infostate_\pl) := 1 / \lvert \actions_\pl(\publicState)\rvert$ when the denominator is 0):
\begin{equation*}
    \policy^{t+1}_\pl ( \action_\pl \mid \infostate_\pl )
    :=
    \regret^{t,+}_\cf{\pl}(\infostate_\pl, \action_\pl)
    \big/
    \sum\nolimits_{\actionAlt_\pl \in \actions_\pl(\infostate_\pl)}
        \regret^{t,+}_\cf{\pl}(\infostate_\pl, \actionAlt_\pl)
    .
\end{equation*}

With these tools, defining \defword{CFR} is straightforward (Algorithm~\ref{alg:vanilla-cfr}):
We initialize the algorithm with a uniformly random policy profile $\policy^0$.
At each iteration, we calculate the counterfactual regrets of $\policy^t$ for all infostates via some method \regUpdate{} and use them to update the policy via regret matching.
Finally, we return the average of the strategies $\policy^t$.

\medskip

In practice, we compute counterfactual values in a single forward- and backward-pass through the game tree.
One way of achieving this is to use the following formulas, corresponding to a ``history-based'' implementation (\Cref{alg:cfv-update-hist}):
\begin{align}
    \historyValue^\policy_{\pl \cf{}}(z)
        & =
        \reachProb^\policy_\others(z)
        \utility_\pl(z)
        \ \ \ \ \textnormal{ for } z \in \mc Z
        \nonumber{}
        \\
    \historyActionValue^\policy_{\pl \cf{}}(h, \action_\pl)
        & =
        \sum_{\action_\others \in \actions_\others(h), \, w \in \mc W}
            \historyValue^\policy_{\pl \cf{}}(haw)
            \label{eq:regret-backprop-histories}
        \\
    \historyValue^\policy_{\pl \cf{}}(h)
        & =
        \sum_{\action_\pl \in \actions_\pl(h)}
            \policy_\pl(\action_\pl \mid h)
            \historyActionValue^\policy_{\pl \cf{}}(h, \action_\pl)
        \nonumber{}
    .
\end{align}
On the way down, we incrementally compute the counterfactual reach probabilities of each history, until we get $\reachProb^\policy_\others(z)$.
    This allows us to compute the counterfactual values of leaves, which we then back-propagate during the upwards pass.
For the purpose of this text, we refer to this history-based implementation as \vanillaCFR{}.
Since \vanillaCFR{} inspects every element of $\histories$, its per-iteration run-time complexity is $\bigO(\lvert\histories\rvert)$:

\begin{restatable}{proposition}{vanillaCFRcomplexity}\label{thm:Vanilla-CFR-complexity}
The time complexity of one \vanillaCFR{} iteration is $\bigO(\lvert\histories\rvert)$.
\end{restatable}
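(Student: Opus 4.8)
The plan is to walk through one iteration of the history-based implementation — \Cref{alg:vanilla-cfr} together with the \cfvUpdate{} routine of \Cref{alg:cfv-update-hist} — and argue that it performs only $\bigO(1)$ arithmetic operations per node and per edge of the history tree, so the total cost is $\bigO(\lvert \histories \rvert)$. Throughout, I would work in the standard unit-cost model, treating each arithmetic operation on reach probabilities, values, and regrets as $\bigO(1)$, and I would use the basic structural fact that the histories $\histories$, ordered by $\prefix$, form a rooted tree: every non-root history $h$ has the unique parent obtained by deleting its last world-state--action pair, so the tree has $\lvert \histories \rvert$ nodes and exactly $\lvert \histories \rvert - 1$ edges. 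I would also note that $\leaves \subset \histories$ and that each infostate touched by the algorithm is the image $\infostate_\pl(h)$ of some history $h$.

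First I would handle the forward pass. Each history is visited once, and for every edge $haw$ the counterfactual reach probability $\reachProb^\policy_\others(haw)$ is obtained from $\reachProb^\policy_\others(h)$ by multiplying in $\mc T(h,a,w)$ and the factors $\policy_\plAlt(\action_\plAlt \mid h)$ for $\plAlt \neq \pl$ — a constant number of operations per edge, hence $\bigO(\lvert \histories \rvert)$ overall; computing the leaf values $\historyValue^\policy_{\pl\cf{}}(z) = \reachProb^\policy_\others(z)\,\utility_\pl(z)$ adds another $\bigO(\lvert \leaves \rvert) \le \bigO(\lvert \histories \rvert)$. Then I would treat the backward pass implementing \cref{eq:regret-backprop-histories}: each child $haw$ of $h$ contributes exactly one summand to exactly one value $\historyActionValue^\policy_{\pl\cf{}}(h,\action_\pl)$, so forming all the $\historyActionValue$'s costs $\bigO(\#\text{edges}) = \bigO(\lvert \histories \rvert)$; and forming $\historyValue^\policy_{\pl\cf{}}(h)$ from them costs $\bigO(\lvert \actions_\pl(h) \rvert)$ at $h$. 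Since every legal $\action_\pl$ extends to some joint action and some successor state, $h$ has at least $\lvert \actions_\pl(h) \rvert$ children, so $\sum_h \lvert \actions_\pl(h) \rvert \le \#\text{edges} < \lvert \histories \rvert$, making this step $\bigO(\lvert \histories \rvert)$ as well.

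Next I would account for the \regUpdate{} bookkeeping and the policy update. Each history $h$ maps to a single infostate $\infostate_\pl(h)$ and contributes one summand to $\infostateValue^\policy_{\textnormal{cf}}(\infostate_\pl(h))$ and one to each $\infostateActionValue^\policy_{\textnormal{cf}}(\infostate_\pl(h),\action_\pl)$, $\action_\pl \in \actions_\pl(h)$; by the bound $\sum_h \lvert \actions_\pl(h) \rvert < \lvert \histories \rvert$ from the previous step this aggregation is $\bigO(\lvert \histories \rvert)$. Finally, regret matching runs once per infostate $\infostate_\pl \in \infostates_\pl$ at cost $\bigO(\lvert \actions_\pl(\infostate_\pl) \rvert)$ — the action set is well defined on infostates by the assumption that an infostate determines its owner's legal actions. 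Choosing a distinct representative history for each infostate gives $\lvert \infostates_\pl \rvert \le \lvert \histories \rvert$ and $\sum_{\infostate_\pl \in \infostates_\pl} \lvert \actions_\pl(\infostate_\pl) \rvert \le \sum_h \lvert \actions_\pl(h) \rvert < \lvert \histories \rvert$, so iterating over infostates, performing regret matching, and averaging the $\policy^t$ are all $\bigO(\lvert \histories \rvert)$. Summing the $\bigO(\lvert \histories \rvert)$ contributions over the fixed set of $N$ players yields the claim.

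The only mildly delicate point I anticipate is the infostate-level accounting in the last paragraph: one must invoke the tree structure together with the $\infostate_\pl \mapsto \actions_\pl$ consistency assumption to replace sums over $\infostates_\pl$ by sums over $\histories$, and to rule out ``phantom'' infostates with no realizing history. Everything else reduces to the routine observation that a tree on $\lvert \histories \rvert$ nodes has fewer than $\lvert \histories \rvert$ edges and the algorithm does $\bigO(1)$ work per node and per edge.
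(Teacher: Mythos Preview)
Your proposal is correct. The paper does not give a separate proof of this proposition; its entire justification is the sentence immediately preceding the statement: ``Since \vanillaCFR{} inspects every element of $\histories$, its per-iteration run-time complexity is $\bigO(\lvert\histories\rvert)$.'' Your argument is therefore a fully fleshed-out version of what the paper treats as self-evident.

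A couple of minor remarks. First, the routine in \Cref{alg:cfv-update-hist} is \histRegUpdate{}, not \cfvUpdate{}; also, that algorithm increments the cumulative regret $\regret(\infostate_\pl(h),\action_\pl)$ directly from the history-level quantities $q_\pl(\action_\pl)-v_\pl$ rather than first aggregating into $\infostateValue_\cfSubscript$ and $\infostateActionValue_\cfSubscript$ as you describe, but the per-history cost is the same $\bigO(\lvert \actions_\pl(h)\rvert)$ either way. Second, the forward pass in \Cref{alg:cfv-update-hist} stores the per-player reach probabilities $\reachProbCode_\pl(h)$ and multiplies them together only at the leaves, rather than maintaining $\reachProb^\policy_\others$ along the way; again this does not change the asymptotics. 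Your careful edge-counting argument --- in particular the observation that $\sum_h \lvert \actions_\pl(h)\rvert$ is bounded by the number of edges, and the use of representative histories to bound the regret-matching cost over infostates --- supplies exactly the details the paper elides.
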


\noindent
The memory complexity is lower-bounded by $\lvert \infostates \rvert$, the number of infostates (because of the need to store the current policy).
However, if \vanillaCFR{} stores the whole $\histories$ in memory, the actual memory complexity can be higher.

\begin{algorithm}
    \begin{algorithmic}[1]
    \State $\policy^0 \gets \text{uniform random policy}$
    \State $\regret(\infostate_\pl, \action_\pl) \gets 0
        \quad \forall \pl \leq N \forall \, \infostate_\pl \in \infostates_\pl \, \forall \action_\pl \in \actions_\pl(\infostate_\pl)$
    \For {$t = 0, \dots, T-1$}
        \State $\regUpdate(\gameroot)$
        \For {$\pl \leq N$ and non-terminal $\infostate_\pl \in \infostates_\pl$}
            \State $\policy^{t+1}_\pl( \, \cdot \, \mid \infostate_\pl) \gets \regretMatching(\regret( \infostate_\pl, \, \cdot \, ))$
        \EndFor
    \EndFor
    \State \Return $\bar \policy = \frac{1}{T} ( \policy^1 + \dots + \policy^T )$
    \end{algorithmic}
    \caption{$\CFR$ using a particular $\regUpdate$} \label{alg:vanilla-cfr}
\end{algorithm}

\begin{algorithm}
    \begin{algorithmic}[1]
    \caption{
        $\histRegUpdate(\history)$.
        \hspace{14em}$\ $ 
        \textit{Implementation of the \vanillaCFR{} regret update on the history tree.}
    }\label{alg:cfv-update-hist}
    \If{$\history = \gameroot$}
        \State $\regret(\infostate_\pl, \action_\pl) \gets 0$
            \hspace{2em} $\forall \pl \leq N \, \forall \infostate_\pl \in \infostates_\pl \, \forall \action_\pl \in \actions_\pl(\infostate_\pl)$
        \State $\reachProbCode_\pl(h) \gets 1$  \hspace{2em} $\forall \pl \leq N$
    \EndIf
    \If{$\history = \leaf \in \leaves$}
        \State \Return $
            \left( \,
                (
                    \prod_{\opp \neq \pl}
                        \reachProbCode_j(\infostate_\opp)
                )
                \reachProbCode_\chance(z)
                u_\pl(z)
            \, \right)_{\pl=1}^N
        $
    \Else
        \State $q_\pl(\action_\pl) \gets 0$
            \hspace{2em} $\forall \pl \leq N \, \forall \action_\pl \in \actions_\pl(h)$
        \For{$\action = (\action_1, \dots, \action_N) \in \actions(\history)$ and $w \in \supp(\mc T(\history, \action))$}
            \State $\reachProbCode_\pl( \history \action w ) \gets \reachProbCode_\pl( \history ) \policy_\pl( \action_\pl \mid \history)
                \hspace{2em} \forall \pl \leq N$
            \State $(q_1(\action), \dots, q_N(\action)) \gets \histRegUpdate(\history \action w)$
            \State $q_\pl(\action_\pl) \increment q_\pl(\action)
                \hspace{2em} \forall \pl \leq N$       
        \EndFor
        \State $v_\pl \gets \sum_{\action_\pl \in \actions_\pl(\history)} \policy_\pl(\action_\pl \mid \history) q_\pl(\action_\pl)$
            \hspace{2em} $\forall \pl \leq N$
        \State $\regret( \infostate_\pl(\history), \action_\pl) \increment q_\pl(\action_\pl) - v_\pl$
            \hspace{2em} $\forall \pl \leq N \ \forall \action_\pl \in \actions_\pl(h)$  
        \State \Return $(v_1, \dots, v_N)$
    \EndIf
    \end{algorithmic}
\end{algorithm}
\section{Public-State CFR}\label{sec:ps-cfr}

\algnewcommand{\LineComment}[1]{\item[] \(\triangleright\) \emph{#1}}
\algrenewcomment[1]{\hfill \(\triangleright\) \ \emph{#1}}

\begin{algorithm}[t]
\caption{$\psRegUpdate(\publicState)$}
\begin{algorithmic}[1]

\LineComment {at the root, initialize reach probabilities by $1$}
\If {$\publicState = \gameRoot$ is the root}
    \State {
        $\reachProb_\pl(\privState_\pl, \gameRoot) \gets 1$
        \quad $\forall \privState_\pl \in \infostates_\pl(\gameRoot)$ $\forall \pl \leq N$
    }
\EndIf

\item[]
\LineComment{once we have reach probs in leaves, use them to compute cf. values} 
\If {$\publicState = \publicTerminalState$ is terminal}
    \For {$\privState_\pl \in \privStates_\pl(\publicTerminalState)$, $\pl \leq N$}
        \label{line:ps:terminal-eval-line1}
        \State $
            \infostateValue_\cfSubscript ( \privState_\pl \mid \publicTerminalState)
            =
            \!\!\!\!\!\!\!\!\!\!
            \sum\limits_{
                    (\privState_\opp)_{\opp \neq \pl}, \,
                    \privState_\opp \in \privStates_\opp(\publicTerminalState)
                }
            \!\!
                \left(
                    \prod\limits_{\opp \neq \pl}
                        \reachProb_\opp( \privState_\opp, \publicTerminalState)
                \right)
                \chanceWeightedUtilities_\pl(\privState_\pl, \privState_\opp \mid \publicTerminalState)
        $
            \label{line:ps:terminal-eval}
    \EndFor
\EndIf

\item[]
\LineComment{propagate reach probabilities downwards and cf. values upwards} 
\If {$\publicState$ is not terminal}
    \For {
        $\privState_\pl \in \privStates_\pl(\publicState)$,
        $\pl \leq N$
    }
        \Comment {$\downarrow$-pass: set action-vals to $0$}
        \State {
            $\infosetActionValue_\cfSubscript(\privState_\pl, \action_\pl \mid \publicState) \gets 0$
                \quad $\forall \action_\pl \in \actions_\pl(\privState_\pl \mid \publicState)$
        }
    \EndFor
    \For {$\publicStateAltAlt \in \ims(\publicState)$}
        \Comment{go through immediate successors}
        \For {$\privStateAlt_\pl = \privState_\pl \action_\pl \observation_\pl \in \privStates_\pl(\publicStateAltAlt)$, $\pl \leq N$}
            \Comment {$\downarrow$-pass: compute reach probs}
            \State {$
                    \reachProb_\pl (\privStateAlt_\pl, \publicStateAltAlt)
                    \gets
                    \reachProb_\pl (\privState_\pl, \publicState)
                        \policy_\pl (\action_\pl \vert \privState_\pl, \publicState)
                $
            }
        \EndFor
        \State {$\psRegUpdate(\publicStateAltAlt)$}
        \For {for all $\privStateAlt_\pl = \privState_\pl \action_\pl \observation_\pl \in \privStates_\pl(\publicStateAltAlt)$, $\pl \leq N$}
            \State {$
                    \infostateActionValue_\cfSubscript(\privState_\pl, \action_\pl \mid \publicState)
                    \increment
                    \infostateValue_\cfSubscript(\privStateAlt_\pl \mid \publicStateAltAlt)
                $
            }
                \Comment {$\uparrow$-pass: update action-vals}
                \label{line:ps:Q-update}
        \EndFor
    \EndFor
    \For {$\privState_\pl \in \privStates_\pl(\publicState)$, $\pl \leq N$}
        \Comment {$\uparrow$-pass: update cf. regrets}
        \State $\infosetValue_\cfSubscript(\privState_\pl \mid \publicState)
            \gets
            \sum_{\action_\pl \in \actions_\pl (\privState_\pl \mid \publicState)}
                \policy_\pl(\action_\pl \mid \privState_\pl, \publicState)
                \infosetActionValue_\cfSubscript(\privState_\pl, \action_\pl \mid \publicState)
        $
            \label{line:ps:V-update}
        \State $
            \regret(\privState_\pl, \action_\pl \mid \publicState)
            \increment
            \infosetActionValue_\cfSubscript(\privState_\pl, \action_\pl \mid \publicState)
                - \infosetValue_\cfSubscript(\privState_\pl \mid \publicState)
        $
            \label{line:ps:regret-update}
    \EndFor
\EndIf
%
\end{algorithmic}
\end{algorithm}

In \Cref{sec:background}, we described an implementation of the CFR algorithm which runs on the history tree of the game.
In this section, we describe an implementation of CFR that traverses the game on the level of information states.
    We also show that this implementation is amenable to parallelisation by decomposing the computation based on public information
            (that is, if multiple infostates correspond to the same sequence public observations, CFR visits them at the same time).
    For this reason, we call this implementation \defword{public state CFR} (\publicStateCFR{}).

Analogously to the formula \eqref{eq:regret-backprop-histories},
    counterfactual values can be computed by backpropagation over information states \cite[Thm.~2,\,(4)]{seitz2023value}:

\begin{restatable}[CFV computation over infostate tree]{lemma}{cfvInfostateBackprop}\label{lem:CFV-update-infostates}
For any $\policy \in \policies$, we have:
\begin{align}
    \infostateValue^\policy_{\pl \cf{}}(\infostate_\pl)
        & =
        \sum_{z \in \mc Z, \, \infostate_\pl(z) = \infostate_\pl}
            \reachProb^\policy_\others(z)
            \utility_\pl(z)
            \ \ \ \ \textnormal{ for terminal infostates }
            \label{eq:regret-backprop-infostates-terminal}
        \\
    \infostateActionValue^\policy_{\pl \cf{}}(\infostate_\pl, \action_\pl)
        & =
        \sum_{\observation_\pl \in \mb O_\pl}
            \infostateValue_{\pl \cf{}}(\infostate_\pl \action_\pl \observation_\pl)
            \label{eq:regret-backprop-infostates-Q}
        \\
    \infostateValue^\policy_{\pl \cf{}}(h)
        & =
        \sum_{\action_\pl \in \actions_\pl(h)}
            \policy_\pl(\action_\pl \mid \infostate_\pl)
            \infostateActionValue^\policy_{\pl \cf{}}(\infostate_\pl, \action_\pl)
        \label{eq:regret-backprop-infostates-V}
    .
\end{align}
\end{restatable}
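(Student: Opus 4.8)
The plan is to obtain each of the three identities by unfolding the definitions of $\infostateValue^\policy_{\pl\cf{}}$ and $\infostateActionValue^\policy_{\pl\cf{}}$ as sums of the corresponding \emph{history} counterfactual values over all histories consistent with the given infostate, substituting the history-level backpropagation formulas from the display containing~\eqref{eq:regret-backprop-histories}, and then regrouping the resulting sums over histories according to the infostate-tree structure. Two elementary observations are used throughout: $\policy_\pl(\action_\pl \mid h)$ depends on $h$ only through $\infostate_\pl(h)$, and $\reachProb^\policy_\others(\cdot)$ does not involve $\pl$'s own strategy. (Up to notation this is \cite[Thm.~2]{seitz2023value}; the aim here is to spell out a self-contained argument.)

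Two of the three equations are immediate. For~\eqref{eq:regret-backprop-infostates-terminal}: when $\infostate_\pl$ is terminal, every $h$ with $\infostate_\pl(h) = \infostate_\pl$ is a terminal history, so the first line of the display containing~\eqref{eq:regret-backprop-histories} gives $\historyValue^\policy_{\pl\cf{}}(h) = \reachProb^\policy_\others(h)\,\utility_\pl(h)$ for each such $h$, and summing over them is exactly the definition of $\infostateValue^\policy_{\pl\cf{}}(\infostate_\pl)$. For~\eqref{eq:regret-backprop-infostates-V}: expand $\infostateValue^\policy_{\pl\cf{}}(\infostate_\pl) = \sum_{\infostate_\pl(h) = \infostate_\pl}\historyValue^\policy_{\pl\cf{}}(h)$, substitute the last line of that display, $\historyValue^\policy_{\pl\cf{}}(h) = \sum_{\action_\pl \in \actions_\pl(h)}\policy_\pl(\action_\pl \mid h)\,\historyActionValue^\policy_{\pl\cf{}}(h,\action_\pl)$, pull the common factor $\policy_\pl(\action_\pl \mid \infostate_\pl)$ out of the sum over $h$, and recognize the inner sum $\sum_{\infostate_\pl(h)=\infostate_\pl}\historyActionValue^\policy_{\pl\cf{}}(h,\action_\pl)$ as $\infostateActionValue^\policy_{\pl\cf{}}(\infostate_\pl,\action_\pl)$.

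The substance of the lemma is~\eqref{eq:regret-backprop-infostates-Q}, and I expect the matching of the history tree with the infostate tree to be the main obstacle: one must check that the regrouping neither double-counts nor omits anything. Starting from $\infostateActionValue^\policy_{\pl\cf{}}(\infostate_\pl,\action_\pl) = \sum_{\infostate_\pl(h)=\infostate_\pl}\historyActionValue^\policy_{\pl\cf{}}(h,\action_\pl)$ and the middle line~\eqref{eq:regret-backprop-histories}, $\historyActionValue^\policy_{\pl\cf{}}(h,\action_\pl) = \sum_{\action_\others \in \actions_\others(h),\, w}\historyValue^\policy_{\pl\cf{}}(h\action w)$ with $\action = (\action_\pl,\action_\others)$, we get a double sum over triples $(h,\action_\others,w)$ with $\infostate_\pl(h) = \infostate_\pl$. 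The key claim is that the map $(h,\action_\others,w) \mapsto h\action w$ is a bijection from these triples onto the set of histories $h'$ with $\infostate_\pl(h') = \infostate_\pl\action_\pl\observation_\pl$ for some $\observation_\pl \in \mb O_\pl$: injectivity is just that the history tree is a tree, while for surjectivity note that each non-root history has a unique immediate predecessor (delete the last action and world state), and deleting that step removes exactly one (action, observation) pair from $\pl$'s infostate — so the predecessor of any $h'$ with $\infostate_\pl(h') = \infostate_\pl\action_\pl\observation_\pl$ is a history with $\pl$-infostate $\infostate_\pl$ at which $\pl$ plays $\action_\pl$, while conversely $\infostate_\pl(h\action w) = \infostate_\pl(h)\,\action_\pl\,\observation_\pl$ with $\observation_\pl$ the $\pl$-private component of $\mc O(w(h),\action,w)$, directly from the definition of an infostate as the record of $\pl$'s past actions and observations. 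Partitioning the triples by the value of $\observation_\pl$ then rewrites the double sum as $\sum_{\observation_\pl \in \mb O_\pl}\sum_{\infostate_\pl(h') = \infostate_\pl\action_\pl\observation_\pl}\historyValue^\policy_{\pl\cf{}}(h') = \sum_{\observation_\pl \in \mb O_\pl}\infostateValue^\policy_{\pl\cf{}}(\infostate_\pl\action_\pl\observation_\pl)$, which is~\eqref{eq:regret-backprop-infostates-Q}, with the convention that summands for which $\infostate_\pl\action_\pl\observation_\pl$ is not realizable are empty.

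The remaining points are routine bookkeeping: inactive players fall under the \textit{noop} convention, so the argument applies verbatim with $\action_\pl = \textit{noop}$ when $\pl \notin p(w(h))$; ``$\infostate_\pl$ terminal'' is consistent across all consistent histories because reaching a terminal state is a public event in the \fosgAbr{} model; and the case $\reachProb^\policy_\others(\infostate_\pl) = 0$ is handled exactly as in the definitions of the $v$- and $q$-values (cf.\ \cite{seitz2021learning}). An equivalent route, if one prefers, is to prove by induction from the leaves of the infostate tree upward that $\psRegUpdate$ computes precisely $\infostateValue^\policy_{\pl\cf{}}$ and $\infostateActionValue^\policy_{\pl\cf{}}$, but the direct unfold-and-regroup argument above seems cleanest.
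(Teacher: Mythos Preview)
The paper does not give its own proof of this lemma: it introduces the statement with the citation \cite[Thm.~2,\,(4)]{seitz2023value} and then uses it as a black box in the proof of \Cref{prop:vanilla-and-ps-are-equivalent}. Your proposal therefore goes strictly beyond what the paper does, supplying a self-contained unfold-and-regroup argument from the history-level recursion~\eqref{eq:regret-backprop-histories}. You explicitly acknowledge this relationship, which is appropriate. The argument you give is correct; the bijection step for~\eqref{eq:regret-backprop-infostates-Q} is the right idea and is cleanly stated.

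One small wording fix: when you write that $\observation_\pl$ is ``the $\pl$-private component of $\mc O(w(h),\action,w)$'', this undersells what $\observation_\pl$ is in the \fosgAbr{} model. Player $\pl$'s observation is the pair $(\mc O_{\textnormal{priv}(\pl)}, \mc O_{\textnormal{pub}})$, and the infostate $\infostate_\pl$ records both pieces (cf.\ the decomposition $\infostate_\pl = (\privState_\pl, \publicState)$ in \Cref{sec:background}). The bijection and the partition by $\observation_\pl$ still work exactly as you describe once $\observation_\pl$ is taken to range over $\mb O_{\textnormal{priv}(\pl)} \times \mb O_{\textnormal{pub}}$, so this is purely a terminological slip and does not affect correctness.
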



\noindent
To run this computation fully over infostates, we additionally make the following observation:
Recalling that every infostate $\infostate_\pl$ can be decomposed into a private and public part as $\infostate_\pl = (\privState_\pl, \publicTerminalState)$,
we denote by
\begin{align*}
    \infostates_\pl(\publicState) & := \{ \infostate_\pl \in \infostates_\pl \mid \infostate_\pl = (\privState_\pl, \publicState) \} \\
    \privStates_\pl(\publicState) & := \{ \privState_\pl \mid (\privState_\pl, \publicState) \in \infostates_\pl \}
\end{align*}
    the set of information states, resp. private information states, compatible with a given public state $\publicState$.
This allows us to rephrase the formula \eqref{eq:regret-backprop-infostates-terminal} as follows.

\begin{restatable}{lemma}{cfvViaCWU}\label{lem:chance-weighted-utilities}
For every
    $\policy \in \policies$
    terminal $\publicTerminalState \in \publicTree$, and
    $\infostate_\pl \in \infostates_\pl(\publicTerminalState)$,
    we have
\begin{align*}
 \infostateValue^\policy_{\pl \cf{}}(\infostate_\pl)
        & =
        \sum_{
            \overset{
                (\privState_\opp)_{\opp \neq \pl}
                \ \in 
            }{
                \prod_{\opp \neq \pl}
                    \privStates_\pl(\publicTerminalState)
            }                
        }
            \left(
                \prod_{\opp \neq \pl, \chance}
                    \reachProb^\policy_\opp(\privState_\opp, \publicTerminalState)
            \right)
            \sum_{
                \overset{
                    z \in \mc Z
                \, : \,
                \privState_\opp(z) = \privState_\opp
                }{
                    \, \forall \opp = 1, \dots, N
                }
            }
                \!\!\!\!\!
                \reachProb_\chance(z)
                \utility_\pl(z)
            \\
        & =:
        \sum_{
            \privState_\others
            \in
            \privStates_\others(\publicTerminalState)
        }
            \left(
                \prod_{\opp \neq \pl, \chance}
                    \reachProb^\policy_\opp(\privState_\opp, \publicTerminalState)
            \right)
            \chanceWeightedUtilities_\pl(\privState_\pl, \privState_\others \mid \publicTerminalState)
    .
\end{align*}
\end{restatable}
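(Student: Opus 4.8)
The plan is to obtain the claimed identity by a direct rearrangement of the terminal case \eqref{eq:regret-backprop-infostates-terminal} of \Cref{lem:CFV-update-infostates}, which already states that $\infostateValue^\policy_{\pl\cf{}}(\infostate_\pl) = \sum_{z \in \leaves,\ \infostate_\pl(z) = \infostate_\pl} \reachProb^\policy_\others(z)\,\utility_\pl(z)$ for terminal $\infostate_\pl$. What is left is purely a reorganization of this sum: split the counterfactual reach probability into its per-player factors, observe that each opponent's factor depends only on that opponent's infostate, and then re-index the sum over terminal histories by the tuple of the opponents' private parts.

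First I would expand $\reachProb^\policy_\others(z) = \reachProb_\chance(z)\prod_{\opp \neq \pl}\reachProb^\policy_\opp(z)$ straight from the definition of the counterfactual reach probability. Next, recall that player $\opp$'s reach probability $\reachProb^\policy_\opp(h)$, being the product of $\policy_\opp$-probabilities along the action--observation sequence of $\opp$ contained in $h$, depends on $h$ only through $\infostate_\opp(h)$; this is exactly what makes the notation $\reachProb^\policy_\opp(\infostate_\opp)$ well-defined, and it holds because an infostate records precisely that player's action--observation history. Moreover, every $z$ occurring in the sum satisfies $\infostate_\pl(z) = \infostate_\pl = (\privState_\pl, \publicTerminalState)$, which pins the public state of $z$ to $\publicTerminalState$; hence for each such $z$ and each $\opp \neq \pl$ we have $\infostate_\opp(z) = (\privState_\opp(z), \publicTerminalState)$ with $\privState_\opp(z) \in \privStates_\opp(\publicTerminalState)$, so $\reachProb^\policy_\opp(z) = \reachProb^\policy_\opp(\privState_\opp(z), \publicTerminalState)$.

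I would then partition the finite index set $\{ z \in \leaves \mid \infostate_\pl(z) = \infostate_\pl \}$ according to the value of the tuple $\privState_\others(z) := (\privState_\opp(z))_{\opp \neq \pl} \in \privStates_\others(\publicTerminalState)$ --- the $\pl$-th private component being already fixed by $\infostate_\pl$. Within the block indexed by a given $\privState_\others$, the factor $\prod_{\opp \neq \pl}\reachProb^\policy_\opp(\privState_\opp, \publicTerminalState)$ is constant in $z$, so it factors out of the inner sum, leaving $\sum \{ \reachProb_\chance(z)\,\utility_\pl(z) \mid z \in \leaves,\ \privState_\opp(z) = \privState_\opp\ \forall \opp = 1, \dots, N \}$; blocks admitting no compatible $z$ contribute $0$ and may be dropped. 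This yields precisely the first displayed line of the lemma, and abbreviating the inner sum as $\chanceWeightedUtilities_\pl(\privState_\pl, \privState_\others \mid \publicTerminalState)$ gives the second. Note that $\reachProb_\chance(z)$ cannot be pulled out alongside the opponents' reaches --- chance has no finite ``private infostate'' through which $\reachProb_\chance$ factors --- which is exactly why it ends up absorbed into $\chanceWeightedUtilities_\pl$.

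The one genuinely delicate point is the claim that $\reachProb^\policy_\opp(h)$ depends on $h$ only through $\infostate_\opp(h)$; this is essentially the perfect-recall property, here baked into the FOSG definition of an infostate as a player's recorded history, but it deserves an explicit line. Granting it, everything else is a careful but routine regrouping of finitely many terms, so I do not expect the re-indexing itself to be an obstacle.
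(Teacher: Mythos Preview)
The paper does not actually supply a proof of this lemma; it is stated as a rephrasing of the terminal case \eqref{eq:regret-backprop-infostates-terminal} of \Cref{lem:CFV-update-infostates} and then used without further justification. Your proposal fills in precisely the natural argument one would expect: expand $\reachProb^\policy_\others(z)$ into its per-player factors, use that each $\reachProb^\policy_\opp$ depends on $z$ only through $\infostate_\opp(z)$ (the perfect-recall point you rightly single out as the one step that merits an explicit line), fix the public component via $\infostate_\pl(z)=\infostate_\pl$, and then regroup the sum over terminal histories by the tuple of opponents' private parts. All of this is correct, and your remark that $\reachProb_\chance$ cannot be factored out in the same way is exactly why it is absorbed into $\chanceWeightedUtilities_\pl$. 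In short, your write-up is more complete than what appears in the paper, and there is nothing to compare it against beyond the implicit claim that the lemma follows directly from \Cref{lem:CFV-update-infostates}.
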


\noindent
For example, in a two-player game, the interpretation of \Cref{lem:chance-weighted-utilities} is that
    every terminal public state is associated with a matrices
        $\chanceWeightedUtilities_\pl( \, \cdot \, , \, \cdot \, \mid \publicTerminalState)$,
        $\pl = 1, 2$,
    that are indexed by $\privState_1 \in \privStates_1(\publicTerminalState)$, $\privState_2 \in \privStates_2(\publicTerminalState)$.
Correspondingly, each cell of the matrix
    groups together all histories that would be indistinguishable even if the players pooled together their private information and
    captures them in the single number, called \defword{chance-weighted utility:}
    \begin{align*}
        \chanceWeightedUtilities_\pl(\privState_1, \privState_2 \mid \publicTerminalState)
        :=
        \sum \left\{ 
            \reachProb_\chance(z)
            \utility_\pl(z)
            \mid
            z \in \mc Z, \, \privState_1(z) = \privState_1, \, \privState_2(z) = \privState_2
        \right\}
        .
    \end{align*}
To obtain the vector of counterfactual values $(\infostateValue^\policy_{\cf{1}}(\privState_1))_{\privState_1 \in \privStates_1(\publicTerminalState)}$,
    all we need to do is to multiply this matrix by the vector of reach probabilities $(\reachProb^\policy_2(\privState_2))_{\privState_2 \in \privStates_2(\publicTerminalState)}$.

By putting these results together (\Cref{alg:cfv-update-hist}), we get a method \psRegUpdate{} which updates counterfactual regrets by traversing the game tree on the level of public states.
We refer to the version of CFR which updates regrets via \psRegUpdate{} as \defword{public-state CFR} (\publicStateCFR{}).
While the following should be clear from the formulation of the algorithm, it is worth stating explicitly.
    \publicStateCFR{} and \vanillaCFR{} are merely two \textit{different implementations of the same algorithm}
    --- that is, their per-iteration time- and memory- complexity might be different, but \textit{their outputs are identical}:

\begin{restatable}{proposition}{PSCFRisEquivalent}\label{prop:vanilla-and-ps-are-equivalent}
For any $\game$ and $T \in \N$,
    the strategy produced by $T$ iterations of $\publicStateCFR{}$ is the same as the strategy produced by $T$ iterations of $\vanillaCFR{}$.
\end{restatable}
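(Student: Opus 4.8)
The plan is to exploit the fact that \vanillaCFR{} and \publicStateCFR{} are the \emph{same} procedure --- \Cref{alg:vanilla-cfr} --- differing only in the regret-update subroutine (\histRegUpdate{} versus \psRegUpdate{}). Everything else in \Cref{alg:vanilla-cfr} is deterministic and identical across the two runs: the initial profile $\policy^0$ is the uniform one, regret matching (including its tie-break $1/\lvert\actions_\pl(\cdot)\rvert$) is a deterministic function of the current regret array, and the output is the deterministic average $\bar\policy = \tfrac1T(\policy^1 + \dots + \policy^T)$. So I would reduce the proposition to showing, by induction on $t = 0, \dots, T-1$, that the policy $\policy^t$ and the contents of the regret array $\regret(\cdot,\cdot)$ right after the $t$-th call of the subroutine agree across the two runs. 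The base case is $\policy^0$ uniform in both; in the inductive step, once the two regret arrays are known to agree, a single deterministic application of \regretMatching{} gives the same $\policy^{t+1}$; and then $\policy^1, \dots, \policy^T$ coincide term by term, hence so does $\bar\policy$.

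The heart of the argument is then the claim that, evaluated at the common policy $\policy = \policy^t$, one call to \histRegUpdate{} and one call to \psRegUpdate{} leave $\regret$ in the same state. I would prove this by showing that both subroutines add into $\regret(\infostate_\pl, \action_\pl)$ --- for every player $\pl$, every non-terminal infostate $\infostate_\pl = (\privState_\pl, \publicState)$, and every $\action_\pl \in \actions_\pl(\infostate_\pl)$ --- exactly the implementation-independent quantity $\infostateActionValue^{\policy}_{\cf{\pl}}(\infostate_\pl, \action_\pl) - \infostateValue^{\policy}_{\cf{\pl}}(\infostate_\pl)$; since the increment does not depend on how it is computed, the updates coincide. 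For \psRegUpdate{} this is precisely what \Cref{lem:CFV-update-infostates} and \Cref{lem:chance-weighted-utilities} deliver, once one checks that the $\downarrow$-pass maintains the invariant $\reachProb_\pl(\privState_\pl, \publicState) = \reachProb^{\policy}_\pl((\privState_\pl, \publicState))$ (true at the root by initialization, preserved by the multiplicative update $\policy_\pl(\action_\pl \mid \privState_\pl, \publicState)$, and well-defined because $\reachProb^\policy_\pl$ depends only on $\pl$'s infostate): line~\ref{line:ps:terminal-eval} then realizes the terminal case \eqref{eq:regret-backprop-infostates-terminal} in the aggregated form of \Cref{lem:chance-weighted-utilities}, and the $\uparrow$-pass (lines~\ref{line:ps:Q-update}, \ref{line:ps:V-update}, \ref{line:ps:regret-update}) realizes \eqref{eq:regret-backprop-infostates-Q} and \eqref{eq:regret-backprop-infostates-V}, where iterating over $\publicStateAltAlt \in \ims(\publicState)$ and over the private continuations inside each $\publicStateAltAlt$ is exactly the sum over $\observation_\pl \in \mb O_\pl$. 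For \histRegUpdate{} the same quantity is produced by the history recursion \eqref{eq:regret-backprop-histories}: unrolling \Cref{alg:cfv-update-hist} shows the value returned at a leaf $\leaf$ is $\historyValue^{\policy}_{\pl \cf{}}(\leaf) = \reachProb^{\policy}_{\others}(\leaf)\utility_\pl(\leaf)$ and that the accumulated $q_\pl(\action_\pl)$ at a history $\history$ equals $\historyActionValue^{\policy}_{\pl \cf{}}(\history, \action_\pl)$, so the increment $q_\pl(\action_\pl) - v_\pl$ added at $\history$ equals $\historyActionValue^{\policy}_{\pl \cf{}}(\history, \action_\pl) - \historyValue^{\policy}_{\pl \cf{}}(\history)$; summing over all $\history$ with $\infostate_\pl(\history) = \infostate_\pl$ gives $\infostateActionValue^{\policy}_{\cf{\pl}}(\infostate_\pl, \action_\pl) - \infostateValue^{\policy}_{\cf{\pl}}(\infostate_\pl)$ by the definition of the infostate(-action) counterfactual values.

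The step I expect to be the main obstacle is exactly this last piece of bookkeeping: checking that the quantities threaded through the two recursions really are the counterfactual values as defined --- in particular getting the ``own-probability-excluded'' convention for $\historyActionValue^{\policy}_{\pl \cf{}}$ right on both sides and the counterfactual reach probabilities $\reachProb^{\policy}_\others$ right --- and confirming that re-grouping the summation from ``over terminal histories'' to ``over the public tree, then over private continuations'' changes nothing, which it does not, since all the operations involved are additions of real numbers. One also has to account consistently for where the regret array is (re)initialized (before the loop in \Cref{alg:vanilla-cfr} versus at the root inside \histRegUpdate{}), but this is routine.
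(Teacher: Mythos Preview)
Your proposal is correct and follows essentially the same approach as the paper's proof: both argue that \psRegUpdate{} computes the same counterfactual values (hence the same regret increments) as \histRegUpdate{}, invoking \Cref{lem:CFV-update-infostates} and \Cref{lem:chance-weighted-utilities} for the public-state side. Your write-up is considerably more explicit --- spelling out the induction on iterations, the forward-pass invariant for reach probabilities, and the verification of the \histRegUpdate{} side --- whereas the paper's proof is a terse three-sentence sketch that takes the \vanillaCFR{} side and the iteration-level induction for granted.
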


The main advantage of \publicStateCFR{} over \vanillaCFR{} is its potential for increased practical and asymptotic efficiency,
    which we discuss in the remainder of this section.
\subsection{Asymptotic Complexity of Public State CFR}\label{sec:sub:complexity-theory}

The theoretical results around \publicStateCFR{}'s complexity are summarized in \Cref{thm:PSCFR-complexity} below.
The intuitions for these results are as follows:
The lower bound follows from the need to store and update the policy (at least for tabular implementations).
    This means that the time and space complexity of one \publicStateCFR{} iteration (i.e., of \psRegUpdate{}) cannot be lower than,
        $
            \lvert
                \bigcup_{\pl=1}^N
                    \infostates_\pl
            \rvert
            =:
            \lvert \infostates \rvert
        $
        the number of information states in the game.
The upper bound holds because
    $\game$ cannot have more information states (or legal ``infostate-profiles'' $(\privState_1, \dots, \privState_N)$) than histories.
    However, these amounts can be similar or even equal (as in perfect-information games).
Finally, as we discuss below,
    the generic evaluation of terminal states can often \cite[Ex.\,1-3]{acceleratedBR} be replaced by a more efficient domain-specific implementation.

\begin{restatable}[Complexity of \publicStateCFR{}]{theorem}{PSCFRcomplexity}\label{thm:PSCFR-complexity}
The time and space complexity of one iteration of \publicStateCFR{} satisfies:
\begin{enumerate}[label=(\arabic*)]
    \item The complexity is always at least $\lvert\infostates \rvert$ and at most $\bigO(\lvert \histories \rvert)$.
    \item Treating $N$ and $\chanceWeightedUtilities_\pl$ as constants, the complexity is $\bigO( \lvert \publicTree \rvert \prod_{\pl=1}^N \lvert \privStates_\pl \rvert )$.\footnotemark
    \item In some domains, including poker, the $\publicTerminalState$ evaluation (lines \ref{line:ps:terminal-eval-line1}-\ref{line:ps:terminal-eval}) can be implemented s.t. the overall complexity is
        $
            \bigO( \lvert \publicTree \rvert \, \lvert \bigcup_{\pl=1}^N \privStates_\pl \rvert )
            =
            \bigO( \lvert \infostates \rvert )
        $.
\end{enumerate}
\end{restatable}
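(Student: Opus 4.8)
I would prove all three items by a single accounting of the cost of one call to the \psRegUpdate{} routine started at $\gameRoot$, together with the regret-matching sweep of \Cref{alg:vanilla-cfr}. Split the per-iteration work (time and space alike) into (i) the \emph{non-terminal bookkeeping} --- the $\downarrow$/$\uparrow$ passes at non-terminal public states that propagate reach probabilities, accumulate $\infosetActionValue_\cfSubscript$ and $\infosetValue_\cfSubscript$ (lines \ref{line:ps:Q-update}, \ref{line:ps:V-update}), and update $\regret$ (line \ref{line:ps:regret-update}) --- and (ii) the \emph{terminal evaluation} at terminal public states $\publicTerminalState$ (lines \ref{line:ps:terminal-eval-line1}--\ref{line:ps:terminal-eval}). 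Items (2) and (3) will differ from item (1) only in how part (ii) is implemented.

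\textbf{Lower bound and part (i).}
The algorithm touches every infostate --- it rewrites $\regret(\privState_\pl,\action_\pl\mid\publicState)$ at the non-terminal ones (and the regret-matching loop reads all of these) and $\infosetValue_\cfSubscript(\privState_\pl\mid\publicTerminalState)$ at the terminal ones --- so both time and space are $\ge|\infostates|$. For the non-terminal cost, the non-recursive work at a non-terminal $\publicState$ is $O\bigl(\sum_{\publicStateAltAlt\in\ims(\publicState)}\sum_\pl|\privStates_\pl(\publicStateAltAlt)|\bigr)$ plus the local regret/value sweep, whose size is the number of infostate--action pairs at $\publicState$. Summing over public states, the first term telescopes: each non-root public state has a unique immediate predecessor, so $\sum_{\publicState}\sum_{\publicStateAltAlt\in\ims(\publicState)}\sum_\pl|\privStates_\pl(\publicStateAltAlt)|\le\sum_{\publicStateAltAlt}\sum_\pl|\privStates_\pl(\publicStateAltAlt)|=\sum_\pl|\infostates_\pl|=|\infostates|$, using $\sum_{\publicState}|\privStates_\pl(\publicState)|=|\infostates_\pl|$. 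The second term sums to the size of the policy/regret table, which lies between $|\infostates|$ and $\bigO(|\histories|)$ (distinct infostate--action pairs extend to distinct histories). Hence part (i) costs between $|\infostates|$ and $\bigO(|\histories|)$.

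\textbf{Generic terminal evaluation --- parts (1) and (2).}
Treating $N$ and $O(1)$-time access to the entries of $\chanceWeightedUtilities_\pl(\privState_\pl,\privState_\others\mid\publicTerminalState)$ as given, executing line \ref{line:ps:terminal-eval} for all players at a fixed $\publicTerminalState$ is a tensor contraction of cost $O\bigl(\prod_\pl|\privStates_\pl(\publicTerminalState)|\bigr)$. For part (2): bound $|\privStates_\pl(\publicTerminalState)|\le|\privStates_\pl|$ and the number of terminal public states by $|\publicTree|$ to get $\bigO\bigl(|\publicTree|\prod_\pl|\privStates_\pl|\bigr)$, which dominates the $\bigO(|\infostates|)$ of part (i); this proves (2). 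For the $\bigO(|\histories|)$ ceiling of (1): a sensible implementation of line \ref{line:ps:terminal-eval} ranges only over tuples $(\privState_\opp)_{\opp\ne\pl}$ with $\chanceWeightedUtilities_\pl(\privState_\pl,\privState_\others\mid\publicTerminalState)\ne 0$ --- this changes nothing, since the omitted entries are $0$ --- and such a tuple exists only when some terminal history $z$ has $\privState_\opp(z)=\privState_\opp$ for all $\opp$; as $z\mapsto(\publicTerminalState(z),(\privState_\pl(z))_\pl)$ is onto the nonzero tuples, their total number over all $\publicTerminalState$ is $\le|\leaves|\le|\histories|$. Storing the $\chanceWeightedUtilities_\pl$ sparsely keeps space at $\bigO(|\infostates|+|\leaves|)=\bigO(|\histories|)$ as well, which together with part (i) gives (1). \emph{The main obstacle is precisely this last point:} read literally, line \ref{line:ps:terminal-eval} iterates over all private-state tuples at $\publicTerminalState$, and mutually incompatible tuples (e.g.\ two poker hands sharing a card) can make that count exceed $|\histories|$; one has to restrict to tuples backed by a genuine terminal history to get the bound.

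\textbf{Domain-specific terminal evaluation --- part (3).}
Here only the cost of line \ref{line:ps:terminal-eval} changes. In poker and the examples of \cite[Ex.\,1-3]{acceleratedBR}, the matrix $\chanceWeightedUtilities_\pl(\,\cdot\,,\,\cdot\mid\publicTerminalState)$ is determined by a ranking of the private hands plus a few public quantities, so the required matrix--vector products can be produced in $O\bigl(\sum_\pl|\privStates_\pl(\publicTerminalState)|\bigr)$ time (one sort plus prefix sums) instead of $O\bigl(\prod_\pl|\privStates_\pl(\publicTerminalState)|\bigr)$. Summing and using $\sum_{\publicTerminalState}|\privStates_\pl(\publicTerminalState)|\le\sum_{\publicState}|\privStates_\pl(\publicState)|=|\infostates_\pl|$, the terminal cost drops to $\bigO(|\infostates|)$; with part (i) the whole iteration is $\bigO(|\infostates|)$. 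Finally, $|\privStates_\pl(\publicState)|\le|\bigcup_j\privStates_j|$ gives $|\infostates|\le N\,|\publicTree|\,|\bigcup_\pl\privStates_\pl|$, and in the regime of interest (e.g.\ hold'em, where the private-state set is essentially the same at every public state) the reverse bound holds up to the constant $N$, so this equals the $\bigO\bigl(|\publicTree|\,|\bigcup_\pl\privStates_\pl|\bigr)$ stated in (3).
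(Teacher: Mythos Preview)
Your proof is correct and follows essentially the same decomposition as the paper's: split per-iteration work into non-terminal bookkeeping (bounded via infostates and infostate--action pairs) and terminal evaluation (bounded either by $|\leaves|$ for (1), by the crude product $\prod_\pl|\privStates_\pl|$ for (2), or by the accelerated $\sum_\pl|\privStates_\pl(\publicTerminalState)|$ from \cite{acceleratedBR} for (3)). In fact your treatment of the $\bigO(|\histories|)$ upper bound in (1) is more careful than the paper's --- you correctly flag that the literal sum over all private-state tuples at a terminal $\publicTerminalState$ can include incompatible combinations (e.g.\ two poker hands sharing a card) and that one must restrict to tuples with nonzero $\chanceWeightedUtilities_\pl$ to get the injection into $\leaves$; the paper's proof glosses over this point.
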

    \footnotetext{
        Recall that $\privStates_\pl$ denotes the set of all possible private observations of Player $\pl$.
    }

Poker serves as a good illustration of (3):
The key insight is that if we want to know the expected utility in poker, it is unnecessary to know the probability of each card combination (hand) that the opponent could have.
Instead, we only need to know the probability that their hand is weaker than ours.
Moreover, suppose our hand changed to a stronger one.
In that case, the probability of the opponent's hand being weaker would only change by the probability of the opponent holding cards weaker than our new hand but stronger than our original hand.
This observation allows us to re-use most of the computation, bringing the per-public-state complexity from $\lvert\privStates_1\rvert  \lvert \privStates_2 \rvert$ to $\lvert \privStates_1 \rvert +\lvert \privStates_2 \rvert$ \cite{acceleratedBR}.
The same procedure can be applied to any game with a notion of ``private-state strength'' (i.e., a linear ordering of $\privStates_\pl$) for any terminal public state.

A reduction in complexity will also be possible in many other games.
For example, in both battleship and stratego, once we reach a terminal public state, the utilities become completely determined by public information!
(In battleship, the player with all ships destroyed loses the game, so the position of the other player's ships can be ignored.
In stratego, the player whose flag is captured loses the game, so the position of all other pieces becomes irrelevant.)
Informally speaking, the potential for complexity reduction is larger when many different private-state combinations $(\privState_1, \dots, \privState_N)$ lead to identical payoffs.
    (As an example of a non-reducible game, we can consider the contrived version of battleship where each joint position of all ships is associated with a unique bonus payoff awarded to the winner.)

\subsection{Practical Improvements of Public State CFR}\label{sec:sub:complexity-practice}

We now discuss two practical ways of improving \publicStateCFR{}.

First, we should note that some games have a more regular structure than others, which in turn allows for \textit{simpler implementations} of \publicStateCFR{}.
To see this, note that the algorithm \publicStateCFR{} implicitly requires the following structures:
    To traverse the tree, we need access to a list of immediate successors of every public state and we need to know the predecessor of every information state (for \cref{line:ps:Q-update}).
    In various places, we need to know the list of private infostates $\privStates_\pl(\publicState)$ compatible a given public state $\publicState$.
    And finally, for any $\privState_\pl \in \privStates_\pl(\publicState)$, we need to know which actions are legal given $\privState_\pl$ at $\publicState$.
In general, each of these structures might require complicated custom implementations or be computationally expensive to build.
    For example, in blind chess or dark chess, determining the actions available to the opponent is very non-trivial and so is figuring out which information states are even possible for them.
Conversely, if the structure of the game is more regular, the implementation of \psRegUpdate{} can be correspondingly simpler.
    A canonical example of such simpler game is limit hold'em poker:
        In this game, the set of available actions is essentially the same at every state.
        Moreover, the only private information is generated at the start of the game (when the players draw private cards),
        so at any point of the game, we can assume that $\privStates_\pl(\publicState) \subset \privStates_\pl = \{ \textnormal{possible private-card combinations} \}$.

Second, if the game structure is sufficiently regular, most of the operations in \psRegUpdate{} can be implemented using vector and matrix operations.
In particular, as already mentioned above, the terminal-state evaluation (lines \ref{line:ps:terminal-eval-line1}-\ref{line:ps:terminal-eval}) can be implemented as a multiplication of the matrix $\chanceWeightedUtilities_\pl(\, \cdot \, , \, \cdot \, \mid \publicTerminalState)$ by the vector of reach probabilities.
In practice, this makes the algorithm more amenable to efficient implementations than \vanillaCFR{}.

A particular class of games which can be especially amenable to these practical improvements is what we could call \BGFull{}s:
\section{Sequential Bayesian Games}\label{sec:beg}

In this section, we describe the class of sequential Bayesian games (SBGs).
    Our primary interest in SBGs is that they have a very regular structure, which makes them more amenable to \publicStateCFR{} than general games.
Our second motivation revolves around the observation that the examples considered in the CFR literature very often belong precisely to this class of games, without this fact being widely recognised.
    We thus find it valuable to bring this class of games to the attention of the CFR community more explicitly.

To build the intuition for SBGs, recall that a (non-sequential) Bayesian game \cite{zamir2020bayesian} is a game where
    each player has a private type (drawn from a known joint distribution),
    the players select their actions simultaneously (for one round only),
    and the resulting utilities depend on the combination of the joint action and the joint type.
A sequential Bayesian game should thus work analogously, except for allowing sequential interactions.

Formally, we consider the following definition:

\begin{definition}[Sequential Bayesian game]\label{def:sequential-BG}
A factored-observation stochastic game $\game$ is a \defword{sequential Bayesian game} (SBG) if
\begin{enumerate}[label=(\roman*)]
    \item apart from the initial randomisation, there are no private observations:\\
        $
            \forall (w, a, w') \, \forall \pl, \opp \in \mc N:
            w \neq \initState
            \implies
            \mc O_\pl(w, a, w') = \mc O_\opp(w, a, w')
        $;
    \item whether a player's action is legal only depends on public information\footnotemark{}:\\
        $
            \forall h, h' \in \histories \, \forall \pl \in \mc N :
            \publicState(h) = \publicState(h')
            \implies
            \actions_\pl(h) = \actions_\pl(h')
        $;
        \label{item:legal-actions}
    \item and all player-actions are publicly observable:\\
        $
            \forall (w, a, w'), (w, \hat a, \hat w') \, \forall \pl \in \mc N :
            a_\pl \neq \hat a_\pl
            \implies
            \mc O_\public(w, a, w') \neq \mc O_\public(w, \hat a, \hat w')
        $.
\end{enumerate}
\end{definition}

\footnotetext{
    Strictly speaking, the condition \eqref{item:legal-actions} is redundant
        since it allows us to make actions effectively illegal for specific private-states by having them incur a prohibitively high reward penalty.
}

\noindent
An important feature of this definition is that it
    (a) allows the initial private observations (i.e., private types) to be correlated,
    (b) allows the existence of ``hidden features'' of the game state (as long as no player has information about them), and
    (c) allows the stochastic transitions in $\game$ to depend on private types and hidden features.
This allows us to model games such poker,
    where the players' private cards, as well as the later-revealed public cards, come from the same deck (and are thus correlated).
Moreover, (a-c) are closely connected, such that getting rid of any of them might require getting rid of all of them.
    (For example (a) allows us to implement (b) and (c) by including dummy players who use a fixed strategy.)
In particular, one could consider a version of \Cref{def:sequential-BG} which additionally
    requires the initial observations to be independent
    and the stochastic transitions to only depend on public information.
However, such additional assumptions would likely fatally limit the formalism's ability to capture practically interesting settings.

Similarly to normal- and extensive-form games,
    \defword{being a sequential Bayesian game is a property of a particular \textit{representation} of a real-world scenario,
    not a property of the scenario itself}:
For example,
    poker is (arguably) naturally modelled as a FOSG  that satisfies \Cref{def:sequential-BG}
    while matching pennies is (arguably) naturally\footnotemark{} modelled as a FOSG that does not satisfy it.
However,
    the FOSG representation of poker can be trivially modified to fail \Cref{def:sequential-BG}
    while the normal-form representation of matching pennies (viewed as a FOSG) satisfies the definition.
In fact, \textit{any} game can be cast as a SBG:
    This can always be done trivially, by using the game's normal-form representation.
    However, as suggested by \Cref{ex:SB-form-representation} below, it might even be possible to find a sequential Bayesian representation that preserves the natural sequential structure of the game
        (though this will come at the cost of having to consider a potentially huge number of auxiliary types for each player).
In summary,
    these observations suggests that it does not make sense, for example, to formally ask whether ``poker is a sequential Bayesian game''.
    Instead, we can either ask
        the formal question ``is a given model of poker is a SBG''
        or the informal question ``is poker is naturally modelled as a SBG''.

\footnotetext{
    Recall that the motivating story behind matching pennies is that one player starts by placing a coin with heads/tails facing up,
    and the other player only makes their guess afterwards.
    In other words, the first player's action at time $t=0$ is not observable by the second player at $t=1$, which is very much \textit{not} allowed in a sequential Bayesian game.
}

There are several well-known examples of games which are natural to model as SBGs.
    The two that appears the most in the CFR literature are
        hold'em poker \cite{DeepStack}
        and liar's dice \cite{liarsDiceWiki} (and its variants Bluff, Dudo, and others).
    Another example is the game battleship \cite{battleshipWiki} if we assume that the placement of ships is determined using a fixed probability distribution.
    The same would apply to stratego \cite{strategoWiki} (if all pieces moved the same).

Finally, the following example suggests a method for converting ``non-\BGAbbrev{}s'' to their ``SB-form representation''.
While we don't expect such representations to be of practical importance, we find them relevant for understanding the formalism's expressive power.

\begin{example}[Sequential-Bayesian-form representation of general games]\label{ex:SB-form-representation}
To illustrate the idea on a simple game, consider the variant of matching pennies where:
    At time $t=0$, player one (P1) selects heads ($H$) or tails ($T$), without player two (P2) being aware of their choice
    At $t=1$, P2 guesses either $H$ or $T$.
    At $t=2$, P1's initial choice is revealed and either P1 pays $\$1$ to P2 if P2 guessed correctly, or P2 pays $\$1$ if P2 guessed incorrectly.
To turn the game into a SBG, we modify it as follows:
    At $t=-1$, P1 privately receives a ``code book'' ---
        formally, a bijection between $ \varphi: \{ H, T \} \to \{ X, Y \}$,
        sampled from the uniformly random distribution over all such bijections.
    At $t=0$, P1 internally selects $H$ or $T$, encodes it using $\varphi$, and publicly announces the result.
        Formally, P1 just selects $\action_1 \in \{ X, Y \}$ and P2 observes this choice.
    At $t=1$, the game remains unchanged --- P2 selects $\action_2 \in \{ H, T \}$ and P1 observes this choice.
    Finally, at $t=2$, the game engine decodes P1's action and awards utilities as in the original game.
        Formally, the players receive utility $\pm 1$ depending on whether P2's guess $\action_2$ matches $\varphi^{-1}(\action_1)$.

More generally, we can assume that at the start of the game, each player receives a sequence of randomly selected ``code books'', one for each round of the game.
Afterwards, every time one of the players needs to take an action that would not be public in the original game,
    they encode it using their code book for the current round and announce the result publicly.
Similarly, an analogous procedure applies to observations that are private in the original game.
    (That is, each player receives a randomly generated observation code-book for each round.
    All originally-private observations are first encoded using a corresponding code book and then announced publicly.
    This ensures that the modified game satisfies the definition of SBG while giving each player precisely as much information as they had in the original game.)
\end{example}
\section{Empirical Evaluation}\label{sec:empirical}

In the previous section, we have shown that in poker, \publicStateCFR{} with a domain-specific evaluation of terminal states has asymptotically lower run-time and memory complexity than \vanillaCFR{}.
We also argued that \publicStateCFR{} is likely to be more efficient than \vanillaCFR{} even without the domain-specific optimizations.
We now compare the performance of these algorithms empirically.

We evaluate the results on a subgame of no-limit Texas hold'em poker. Specifically, we use the subgame after the last public card is dealt (\ie a river subgame), with public cards (9s, 7c, 5s, 4h, 3c), pot size 200, and uniform distribution over private cards. To make the computation tractable, we use the (fold, call, pot, all-in) action-abstraction, which results in a subgame that has 61,000,831 states (i.e., histories) and 21,620 decision points (i.e., active-player infosets).

All algorithms are implemented using the open-source library OpenSpiel \cite{openspiel}.
All computations ran on a single thread of a single CPU (Intel Xeon Gold 5120 2.2GHz).
    While public-state CFR is amenable to parallelization, we stick to a single-thread implementation running on a CPU to make the comparison to \vanillaCFR{} meaningful.
We implemented two versions of public-state CFR --- the baseline version \genericPSCFR{} applicable to any \BGAbbrev{} and a poker-specific version \pokerPSCFR{} whose terminal-state evaluation run-time is linear (rather than quadratic) in the number of infosets \cite{acceleratedBR}.
For vanilla CFR, we used a version already present in OpenSpiel.
We refer to this version as \openSpielCFR{} since it only maintains a small portion of the game tree at any given time, resulting in slower child-retrieval but low memory usage.
(Recall that the memory requirements of CFR are lower-bounded by the need to store the strategy in each infoset.
An analogous low-memory variant is thus unnecessary for \genericPSCFR{}, whose memory requirements are already quite low; see Table~\ref{tab:evaluation}).
To make the algorithm more directly comparable to our implementation of public-state CFR, we also implemented a version that keeps the structure needed for child-retrieval in memory.
We refer to this version simply as \saveCFR{}.

\begin{table*}[tb]
\begin{center}
\begin{tabular}{|c || c | c | c | c |}
 \hline
 Algorithm & Setup & 1000 it. & One it. & Memory\\
 \hline\hline
 \saveCFR{} \phantom{(mem-eff)} & 2.92 min & 5.48 h\phantom{in}\hspace{0.3em} & 19.73 s\hspace{0.4em} & \hspace{0.5em}22 GB\\
 \hline
 \phantom{Vanil} \genericPSCFR{} \phantom{(mem-eff)} & 2.36 s\phantom{in}\hspace{0.5em} & 2.89 min &  173.15 ms & 736 MB\\
 \hline
 \phantom{Vanil} \pokerPSCFR{}\phantom{f} & 2.82 s\phantom{in}\hspace{0.5em} & 1.42 min & \hspace{0.5em}85.05 ms & 526 MB\\
 \hline
 \phantom{}\openSpielCFR{}\phantom{} & 1.57 min & 25.75 h \phantom{in}\hspace{0.4em} & 92.71 s\hspace{0.35em} & 292 MB\\
 \hline
\end{tabular}
\end{center}
\caption{A comparison of \vanillaCFR{} and \publicStateCFR{} on a river subgame of no-limit Texas hold'em poker.
}
\label{tab:evaluation}
\end{table*}
\begin{figure}
    \centering
    \includegraphics[width=0.9\textwidth]{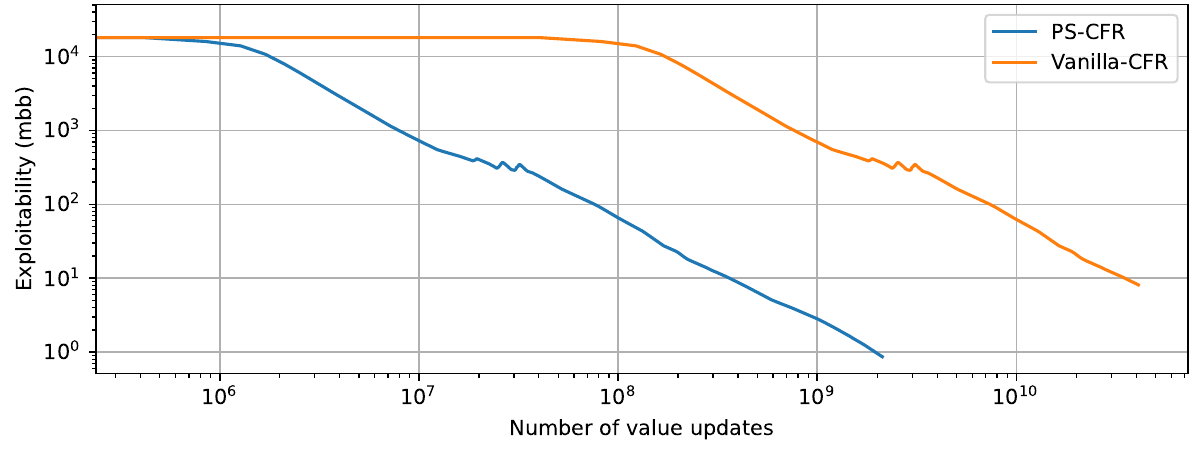}
    \caption{Comparison of necessary value updates in Vanilla CFR and PS-CFR. Note that singe iteration takes roughly $4 \times 10^5$ for PS-CFR and $4 \times 10^7$ for Vanilla CFR.}
    \label{fig:num_updates}
\end{figure}

We ran 1000 iterations of each algorithm and measured the resulting memory usage, the time required for initialization, and the subsequent time needed to run the 1000 iterations (Table~\ref{tab:evaluation}).
First, we see that the poker-specific version of \publicStateCFR{} takes slightly longer to initialize, but afterwards only requires 70\% of the memory and runs twice as fast.
Among the two versions of vanilla CFR, \openSpielCFR{} is roughly five times slower than \saveCFR{} but uses roughly 75x less memory.
Most importantly, we see that both versions of vanilla CFR are extremely slow compared to \publicStateCFR{}:
The faster version requires roughly 200x more time and 40x more memory than \pokerPSCFR{},
while the slower version requires roughly 1000x more time and a similar amount of memory.

One worry is that the experiment above could be misleading by, for example,
    comparing a highly optimized version of PS-CFR with a simplistic implementation of Vanilla-CFR.
To compare the two algorithm in a way that is less sensitive to code optimization,
    we also look at the number of value updates
    --- that is, the number of times that each algorithm changes the (counterfactual) value of some information state.
In Figure~\ref{fig:num_updates}, we compare the number of value updates needed by the two algorithms to attain specific exploitability levels.
The results validate several expected observations.
Firstly, based on the exploitability metric, it becomes evident that Vanilla-CFR and PS-CFR execute the same algorithm, yielding identical exploitabilities while differing only in terms of efficiency.
Secondly, the figure shows that Vanilla-CFR needs approximately 100x more value updates than PS-CFR to achieve the same level of exploitability.
In particular, this means that the actual 200x speedup (\Cref{tab:evaluation}) is within a factor of $2$ of the speedup caused by the difference in the number of value updates.
This shows that
    the vast majority of PS-CFR's observed advantage over Vanilla-CFR is due to fundamental reasons,
    rather than due to differences in the quality of the particular implementation.
\section{Related Work}
        \label{sec:related-work}

\subsection{Public States and CFR}

The CFR algorithm was first described in \cite{CFR}.
    For a detailed description of the underlying concepts, we refer to the reader to \cite{seitz2023value}.
The full public state formulation of CFR studied in this paper has not been formally described before.
However, many of the related ideas are well-known in the context of poker, and in particular, in the community around the annual poker competition \cite{ACPC}.
In terms of published theory, using public information for more efficient evaluation dates at least to \cite{acceleratedBR}.
    In terms of implementation, public states allow for ``vectorized'' formulation of CFR, which has been used in many papers, including \cite{CFR-D,schmid2019variance,DeepStack,rebel,POG} and parts of \cite{Libratus,Pluribus}.
    (The code of \cite{rebel} is public.
        The implementations of \cite{DeepStack,schmid2019variance} are not public, but they are discussed in the appendix of \cite{schmid2019variance}.
        Authors of \cite{POG} explain in the main paper that they run a variant of a CFR on a public game tree.
        The implementation of \cite{CFR-D} uses the Public Chance Sampling variant of CFR \cite{johanson2012efficient}, which works on infosets.
        The implementation details of \cite{Libratus,Pluribus} are neither public nor discussed in published literature, but our claims about them are at least based on personal correspondence with the authors.)
We thus argue that many papers already use an implementation of CFR that is closer to \genericPSCFR{} (Section~\ref{sec:ps-cfr}) than to \vanillaCFR{}.

Importantly, the concepts around public states have primarily appeared in the context of poker,
    with rare extensions to specific other domains,
    and extending them to general EFGs proved more difficult than one might intuitively expect \cite{kovarik2019problems,FOG}.
Our main contribution in this respect is identifying sequential factored-observation stochastic games (and \BGFull{}s in particular) as the class of games where extending poker results is natural and straightforward.

\medskip

In terms of important but less-directly relevant results,
    note that the literature describes a number of CFR variants such as
    CFR+ \cite{CFR+,revisitingCFR+}, Linear CFR \cite{linearCFR}, or Deep CFR \cite{deepCFR}.
    This line of work aims to improve CFR by \textit{modifying} its output.
As a result, it is orthogonal to -- but likely compatible with -- the public-state reformulation of CFR, which aims to produce the \textit{same} output as vanilla CFR, but to do so more efficiently.

Another line of CFR research is about applying the algorithm to the sequence-form of strategies~\cite{farina2019online}.
    This is a powerful technique
        which reformulates each CFR update as a single matrix-by-vector multiplication,
        and opens the problem to the application of generic matrix multiplication algorithms.
A disadvantage of this approach is that it seems difficult to combine with depth-limited methods~\cite{seitz2023value},
    which proved essential for some of the past successes~\cite{DeepStack}.
At the present time, we are uncertain how the two approaches compare to each other and whether they are compatible or not.

\subsection{Modelling Sequential Bayesian Games}

Sequential variants of Bayesian games are considered, for example, in 
    \cite{osborne1994course} (under the name ``Bayesian extensive game with observable actions''),
    \cite{fudenberg1991game} (as ``multi-stage games with observed actions and incomplete information''), and
    \cite{battigalli2003rationalization} (as ``games of incomplete information with observable actions'').
The models of sequential Bayesian games can vary in several formal details.
First, can different players' information about the rules of the game (i.e., each player's type) be correlated?
    After reading \cite{osborne1994course,fudenberg1991game,battigalli2003rationalization}, we believe that:
        (a) Independent player types simplify some of the analysis, so they are often assumed for convenience.
        (b) However, correlated player types are natural and in line with researchers' intuitions about Bayesian games.
    We allow for correlated player types since it enables us to, for example, model poker, where the players draw their private cards (types) from a shared deck.
Second, is there an explicitly designated ``chance'' player whose policy is fixed?
    The above works don't explicitly include this possibility as a part of the formal definitions but mention it in the discussion.
Finally, can the game dynamics (formally: state transitions) depend on the players' types?
    (Without this dependence, we cannot describe settings such as private and public cards in poker being drawn from the same deck.)
    We allow this since once we have correlated player types, formally adding type-dependent state transitions only affects the notation, not the actual expressive power of the model.
        This is because such transitions could also be described using a chance player who controls the environment and knows each player's type.

\subsection{Imperfect vs Incomplete Information}

The concept of sequential Bayesian games is tied to the informal concept of \emph{incomplete information}.
While some authors use this term interchangeably with \emph{imperfect} information, some also make the following distinction:
A player is said to have \defword{incomplete information} if they are uncertain about the \textit{rules} of the game (\eg legal actions, identity of other players, utility functions).
In contrast, they are said to have imperfect information if they are uncertain about the current \textit{state} of the game.
In his seminal paper \cite{harsanyi1967games}, Harsanyi explains the relationship between the two types of games and introduces (what is now typically called) Bayesian games as a formalization of strategic interaction under incomplete information.
\section{Conclusion}\label{sec:discussion}
    \label{sec:sub:conclusion}

We have recently seen a lot of progress around counterfactual regret minimization.
However, while many works aim to solve general (two-player zero-sum) imperfect information games --- typically formally modeled as extensive-form games --- their empirical evaluation tends only to consider poker, liar's dice, or other games whose structure is near-identical to poker.
As a result, existing implementations of most CFR-based algorithms use many optimizations which rely on non-generalizable poker-specific assumptions.
These optimizations are often so extensive that it might be more appropriate to view the implemented algorithm as distinct from the CFR formulation described in \cite{CFR}.
Moreover,
    since these optimizations were never formally described in a more general setting,
    the ideas are needlessly difficult to translate into other settings,
    particularly for those not already familiar with the CFR literature.

In this paper, we described public-state CFR,
    an algorithm which produces the same outputs as classical CFR (\Cref{prop:vanilla-and-ps-are-equivalent})
    while automatically incorporating some of the optimizations when applied to games with enough public information (cf. \Cref{thm:Vanilla-CFR-complexity} vs \Cref{thm:PSCFR-complexity}).
We argued that
    the algorithm is also amenable to parallelisation
    and can further be made more efficient when the specific domain has compact rules for evaluating the outcome of the game.
Even without these optimizations,
    we saw that with abundant public information,
    PS-CFR significantly outperforms vanilla CFR
        (1000 iterations on a poker subgame taking $\sim\!3$ minutes instead of $\sim\!5.5$ hours; \Cref{sec:empirical}).

We defined a class of \BGFull{}s (\Cref{def:sequential-BG}),
    where each player is initially assigned a private type
        (e.g., private cards in poker),
    and afterwards all actions and observations are public.
While this condition is neither sufficient nor necessary,
    there seems to be a connection between being naturally modelled as a \BGAbbrev{} and being amenable to public-state CFR.
Moreover, it seems noteworthy that hold'em poker and liar's dice
    -- the two historically most popular benchmarks in the CFR literature --
    are both a perfect fit for the \BGAbbrev{} model.

Finally, we believe that these observations shed a new light on the existing CFR results.
First,
    when one implements vanilla CFR to solve some game,
        the algorithm is likely to run significantly slower than one might naively expect from reading the existing literature.
    This is because many of the papers implicitly use something closer to the (already-optimized) public-state CFR.
Second,
    when public observations are abundant,
        it seems beneficial to represent the domain as a factored-observation stochastic game
            (or even \BGFull{} or a similar model).
    One can then apply a generic version of PS-CFR,
        which will be both more efficient than vanilla CFR and
        more reusable than a fully custom implementation.
    (For further speedups, one might, of course, need to apply additional domain-specific improvements.
    Nonetheless, PS-CFR seems as a better starting point than vanilla CFR.)
Finally,
    when public observations are scarce (e.g., in imperfect-information variants of chess),
        both vanilla CFR and PS-CFR are likely to prove overly slow.
    This does not mean that CFR cannot be the right tool for the problem.
    However, it does suggest that to get a viable implementation,
        one likely needs to exploit some additional properties of the given game.
    As a result, dealing with such domains is likely to be more time consuming
        (in terms of researcher time)
        than one might initially hope.

    \subsection*{Acknowledgments}

The initial idea behind \Cref{ex:SB-form-representation} is due to Vincent Conitzer.
This work was supported by the Czech science foundation grant no. GA22-26655S, the Grant Agency of the Czech Technical University in Prague, grant No. SGS22/168/OHK3/3T/13. Computational resources were supplied by the project e-Infrastruktura CZ (e-INFRA CZ LM2018140) supported by the Ministry of Education, Youth and Sports of the Czech Republic.

\bibliography{main}

\begin{appendices}
    \section{Proofs}\label{sec:app:proofs}

\PSCFRisEquivalent*
\begin{proof}
By \Cref{lem:chance-weighted-utilities}, the numbers calculated on \Cref{line:ps:terminal-eval} produce the correct values $\infostateValue^\policy_{\cf{\pl}}$ that satisfy \Cref{eq:regret-backprop-infostates-terminal}.
Similarly, \Cref{line:ps:Q-update} and \Cref{line:ps:V-update} ensure that \Cref{eq:regret-backprop-infostates-Q}, resp. \Cref{eq:regret-backprop-infostates-V} hold.
By \Cref{lem:CFV-update-infostates}, this calculation produces the counterfactual values corresponding to $\policy$, which then yield the correct regrets (\Cref{line:ps:regret-update}).
\end{proof}

\PSCFRcomplexity*

\begin{proof}
(1) The lower-bound holds because each iteration of CFR needs to, at least, update each player's strategy at each information state
    (and keep the whole strategy in memory).
The upper bound follows from the observation that
    at non-terminal public states,
        the algorithm touches either information states $\infostate_\pl \in \infostates_\pl(\publicState)$ or infostate-action pairs (at most $k$-times for some small $k$),
            and the number of each player's infostates $\infostates_\pl$, as well as their legal infostate-action pairs, is smaller than the number of histories $\mc H$.
            (For the infostate-action pairs, this is because the number of legal infostate-action pairs at $\publicState$ is smaller than the number of histories in the \textit{immediate successors} of $\publicState$.)
        The number of operations performed for non-terminal public states is thus smaller than $2k N \lvert \mc H \rvert$.
    At terminal public states,
        the algorithm iterates (for each player) over all combinations of private infostates $(\privState_1, \dots, \privState_N) \in \prod_{\pl = 1}^N \privStates_\pl$.
        Since different private-infostate vectors are necessarily generated by different histories, we have
            $
                \lvert \bigcup_{\textnormal{terminal } \publicTerminalState} \prod_\pl \privStates_\pl(\publicTerminalState) \rvert
                \leq
                \lvert \mc Z \rvert
                \leq
                \lvert \mc H \rvert
                .
            $
    This shows that the time and memory complexity is at most $k' N \lvert \mc H \rvert$ for some small constant $k'$.
Moreover, this upper bound is necessarily tight because in perfect-information games, $\lvert \infostates_\pl \rvert = \lvert \mc H \rvert$.

(2) Going through the upper-bound calculation in (1) in more detail, we see that the complexity is bounded by
\begin{align*}
    &
    \sum_\pl
        \sum_{\publicState \in \publicTree}
            k
            \lvert
                \privStates_\pl(\publicState)
            \rvert
    +
    \sum_\pl
        \sum_{\publicTerminalState \textnormal{ terminal} }
            \lvert
                \prod_\pl \privStates_\pl(\publicTerminalState)
            \rvert
        \\
    & \leq
        kN \max_\pl \lvert \privStates_\pl \rvert
        + 
        N \prod_\pl \lvert \privStates_\pl \rvert
        \\
    & \leq
        2kN
        \prod_\pl
            \lvert \privStates_\pl \rvert
    .
\end{align*}

(3) The calculation from (2) shows that if
    we reduce the complexity of evaluating terminal public states
        from $\prod_\pl \lvert \privStates_\pl \rvert$
        to $C$,
    the overall complexity will reduce to
        $
            \bigO \left(
                \,
                \lvert \publicTree \rvert
                \cdot 
                \max \{ 
                 C , \,
                 \lvert \bigcup_\pl \privStates_\pl \rvert
                \}
                \,
            \right)
        $.
By \cite[Example 3]{acceleratedBR}, such more-efficient evaluation of terminal public states is possible
    in a number of domains,
    and two-player hold'em poker in particular admits $C = \sum_{\pl=1}^2 \lvert \privStates_\pl \rvert$.
(In poker, this requires a one-time investment to sort the private cards in each state based on their strength \cite{acceleratedBR}.
    However, we can hard-code this sorting or amortize its cost across all of the CFR iterations.)
This concludes the proof.
\end{proof}
\end{appendices}

\end{document}